\numberwithin{equation}{section}
\theoremstyle{plain}
\newtheorem{theorem}{Theorem}[section]
\newtheorem{lemma}[theorem]{Lemma}
\newtheorem{corollary}[theorem]{Corollary}
\theoremstyle{definition}
\newtheorem{definition}[theorem]{Definition}
\newtheorem{example}[theorem]{Example}
\theoremstyle{remark}
\newtheorem{remark}{Remark}
\newcommand{\Prob}{\mathsf{P}}
\newcommand{\Expect}{\mathsf{E}}
\DeclareMathOperator*{\esssup}{ess\,sup}
\def\rc{\textcolor{black}}
\def\rcn{\textcolor{black}}
\begin{document}


\title{Robust Quickest Change Detection in Non-Stationary Processes}

\author{
\name{Yingze Hou\textsuperscript{a}, Yousef Oleyaeimotlagh\textsuperscript{a}, Rahul Mishra\textsuperscript{c},  Hoda Bidkhori\textsuperscript{b}, Taposh Banerjee\textsuperscript{a}\thanks{CONTACT: Taposh Banerjee. Email: taposh.banerjee@pitt.edu}}
\affil{\textsuperscript{a}University of Pittsburgh, \textsuperscript{b}George Mason University, \textsuperscript{c}Indian Space Research Organization.
}
}
\maketitle

\begin{abstract}
\rcn{Exactly and asymptotically optimal algorithms are developed for robust detection of changes in non-stationary processes. In non-stationary processes, the distribution of the data after change varies with time. The decision maker does not have access to precise information on the post-change distribution. It is shown that if the post-change non-stationary family has a distribution that is least favorable in a well-defined sense, then the algorithms designed using the least favorable laws are robust optimal. This is the first result where an exactly robust optimal solution is obtained in a non-stationary setting, where the least favorable law is also allowed to be non-stationary. Examples of non-stationary processes encountered in public health monitoring and space and military applications are provided. Our robust algorithms are also applied to real and simulated data to show their effectiveness.}

\end{abstract}

\begin{keywords}
Robust Change Detection, Non-stationary Processes, Satellite Safety, Intrusion Detection, and Anomaly Detection.
\end{keywords}

\section{Introduction}
In classical quickest change detection (QCD) theory,
optimal and asymptotically optimal solutions are developed to detect a sudden change in the distribution of a stochastic process. The strongest results are available for the independent and identically distributed (i.i.d.) setting.  In the QCD problem in the i.i.d. setting, a decision maker observes a sequence of random variables $\{X_n\}$. 
Before a time $\nu$ (called the change point), the random variables are i.i.d. with a fixed density $f$, and after $\nu$, are i.i.d. with another density $g$:
\begin{equation}\label{eq:iidQCD}
	X_n \sim
	\begin{cases}
		f, &\quad \forall n < \nu, \\
		g, &\quad \forall n \geq \nu.
	\end{cases}
\end{equation}
Here $f$ and $g$ are densities such that
$$
D(g \; \| \; f) := \int g(x) \log \frac{g(x)}{f(x)} dx \; > \; 0.
$$
The goal of the QCD problem is to detect this change in distribution from $f$ to $g$ with the minimum possible delay subject to a constraint on the rate of false alarms (\cite{veer-bane-elsevierbook-2013, tart-niki-bass-2014, poor-hadj-qcd-book-2009}). 

In the Bayesian setting, it is assumed that the change point is a random variable.
The optimal solution is the Shiryaev test (\cite{shir-siamtpa-1963, tart-veer-siamtpa-2005}):
\begin{equation}
    \label{eq:iidShir}
    \tau_s = \min\{n \geq 1: \Prob(\nu \leq n | X_1, \dots, X_n) \geq A\}.
\end{equation}
The asymptotic optimality of this test is established in \cite{tart-veer-siamtpa-2005}. This test is exactly optimal when the change point is a geometrically distributed random variable: $\nu \sim \text{Geom}(\rho)$. In this case, the Shiryaev statistic has a simple recursion : if $p_n = \Prob(\nu \leq n | X_1, \dots, X_n)$ is the Shiryaev statistic, then the statistic $R_n$ defined as $R_n = \frac{p_n}{1-p_n}$ can be written as
\begin{equation}
 R_n = \frac{1}{(1-\rho)^n}\sum_{k=1}^n (1-\rho)^{k-1}\rho \; \prod_{i=k}^n \frac{g(X_i)}{f(X_i)}, 
\end{equation}
and has the simple recursion: 
\begin{equation}
    \label{eq:ShirRecur}
    R_n = \frac{R_{n-1} + \rho}{1-\rho} \; \frac{g(X_n)}{f(X_n)}, \quad R_0=0.
\end{equation}
The exact problem for which this is the optimal solution is discussed in Section~\ref{sec:Bayes}. 
However, we note that the threshold $A$ must be carefully selected to satisfy a constraint on the probability of a false alarm. 

In non-Bayesian settings (\cite{poll-astat-1985, lord-amstat-1971}), the change point $\nu$ is treated as an unknown constant. In this case, the notion of an average delay is not defined. A notion of conditional delay (conditioned on the change point) can be defined, but the average delay value depends on the location of the change point. Thus, a minimax approach is taken. 
See Section~\ref{sec:minimax} for a precise mathematical formulation. 
The optimal solution for the i.i.d. setting is given by the cumulative sum (CUSUM) algorithm (\cite{page-biometrica-1954, mous-astat-1986, lai-ieeetit-1998}): 
\begin{equation}
    \label{eq:iidcusum}
    \tau_c = \min\{n \geq 1: W_n \geq A\}.
\end{equation}
Here the CUSUM statistic $W_n$ is given by
\begin{equation}
    W_n = \max_{1 \leq k \leq n} \sum_{i=k}^n \log \frac{g(X_i)}{f(X_i)},
\end{equation}
and also has an efficient recursion:
\begin{equation}
    W_n = \left(W_{n-1} + \log \frac{g(X_n)}{f(X_n)}\right)^+, \quad \quad W_0=0.
\end{equation}
The threshold $A$ in \eqref{eq:iidcusum} must be chosen carefully to meet a constraint on the false alarm rate.
For a review of the QCD literature, we refer to  \cite{bansal1986algorithm, lai-ieeetit-1998, tart-veer-siamtpa-2005, tart-niki-bass-2014, tart-book-2019}.

In many practical change detection problems (see Section~\ref{sec:motivation} for details), e.g., detecting approaching debris in satellite safety applications, detecting an approaching enemy object in military applications, and detecting the onset of a public health crisis (e.g., a pandemic), the change point model encountered differs from the model discussed above in two fundamental ways:
\begin{enumerate}
    \item The observation variables may still be independent, but the post-change model may not be stationary, i.e., the data density after the change varies with time. Specifically, a more common model encountered in practice is
    \begin{equation}\label{eq:changepointmodel_0}
	X_n \sim
	\begin{cases}
		f, &\quad \forall n < \nu, \\
		g_{n}, &\quad \forall n \geq \nu.
	\end{cases}
      \end{equation}
      With a slight abuse of notation (borrowing it from the time-series literature [12]), we will refer to a process with densities $\{g_{n}\}_{n \geq \nu}$  as a non-stationary process since the density of the data evolves or changes with time. 
      \item The post-change density information is not precisely known. For example, the densities $\{g_{n}\}$ are not available to the decision maker. 
\end{enumerate}

The main aim of this paper is to address these two issues and develop robust optimal solutions for change detection in the non-stationary setting of \eqref{eq:changepointmodel_0}. In Section~\ref{sec:Bayes}, we discuss the problem in a Bayesian setting, and in Section~\ref{sec:minimax}, we discuss the problem in the minimax setting of Lorden (\cite{lord-amstat-1971}). We provide a brief review of the existing literature and our contributions as compared to it in Section~\ref{sec:LitReviewContributions}.

The general solution approach taken is as follows. We assume that there are families of distributions $\{\mathcal{P}_{n}\}$ such that
$$
g_{n} \in \mathcal{P}_{n}, \quad n = 1,2, \dots, \quad n \geq \nu
$$
and the families $\{\mathcal{P}_{n}\}$ are known to the decision maker. We further assume that there exist densities $\bar{g}_{n} \in \mathcal{P}_{n}$ that the density sequence $ \{\bar{g}_{n}\}$ is least favorable in a well-defined sense (this notion will be made rigorous below). Then, under mild additional conditions, the generalized versions of the Shiryaev test and the CUSUM test designed using the least favorable densities $ \{\bar{g}_{n}\}$ are robust (exactly or asymptotically) optimal for their respective problem settings (Section~\ref{sec:Bayes} and Section~\ref{sec:minimax}). 
In Section~\ref{sec:exactrobustopt}, we provide conditions under which the CUSUM and the Shiryaev algorithms are exactly robust optimal for non-stationary processes. In Section~\ref{sec:AsymptoticOpt}, we provide a set of sufficient conditions for the CUSUM and the Shiryaev algorithms to be asymptotically robust optimal for non-stationary processes. In Section~\ref{sec:LFDexamples}, we provide examples of \rcn{least favorable laws} from Gaussian and Poisson families of distributions. 
In Section~\ref{sec:NumericalResults}, we provide several numerical results to show the effectiveness of the robust tests. Specifically, we apply a robust test to detect arriving or approaching aircraft using aviation data. We also apply a robust test to detect the onset of a pandemic using COVID-19 daily infection data. Finally, we compare, using simulations, the performance of a robust test with a test that is not necessarily designed to be robust. 

\subsection{Motivating Applications}
\label{sec:motivation}
 \begin{figure}
    \centering\includegraphics[scale=0.45]{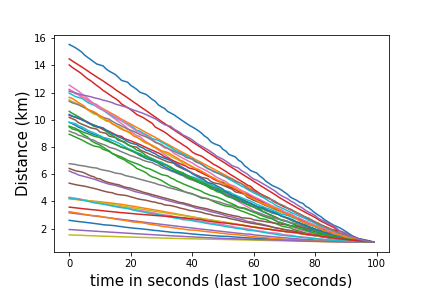}
     \includegraphics[scale=0.45]{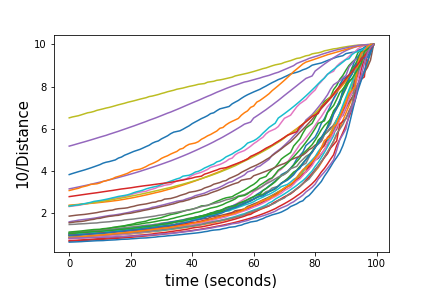}
     \caption{Distance measurements and corresponding signals extracted from datasets on aircraft trajectories collected from aircraft around the Pittsburgh-Butler Regional Airport (\cite{Patrikar2021}). }
     \label{fig:exampleFlights}
 \end{figure}

The problem of quickest change detection with non-stationary post-change distribution is encountered in the following applications: 
\begin{enumerate}
\item \textit{Satellite safety}: As discussed in \cite{brucks2023modeling}, one of the major challenges in satellite safety is to detect approaching debris that can cause damage to the satellite. As debris approaches the satellite, the corresponding measurements are expected to grow stochastically with time leading to a non-stationary post-change process. 
    \item \textit{Military applications}: A classical problem in military applications is the problem of detection of an arriving enemy aircraft or an enemy object (\cite{brucks2023modeling}). Similar to the satellite safety example, an approaching aircraft will lead to a stochastically growing process. As an example, in Fig.~\ref{fig:exampleFlights}, we have plotted distance and signal measurements extracted from aircraft trajectory datasets collected around Pittsburgh-Butler Regional Airport (\cite{Patrikar2021}). 
    \item \textit{Public health application}: In the post-COVID pandemic era, it is of significant interest to detect the onset of a pandemic. In Fig.~\ref{fig:exampleCOVID}, we have plotted the daily infection numbers for Allegheny and St. Louis counties for the first $200$ days starting $2020/1/22$. As seen in the figure, the numbers grow and then subside over time.  
    \item \textit{Social networking}: Another possible application is trend detection in social network data. The trending topic will cause a sudden increase in the number of messages, and these numbers can fluctuate over time. 
\end{enumerate}
In all the above applications, the exact manner in which the change will occur is generally not known to the decision maker, and they may have to take a conservative approach to the design of the detection system. A robust solution can guarantee that all possible post-change scenarios can be detected. 

 \begin{figure}[h]
    \centering\includegraphics[scale=0.45]{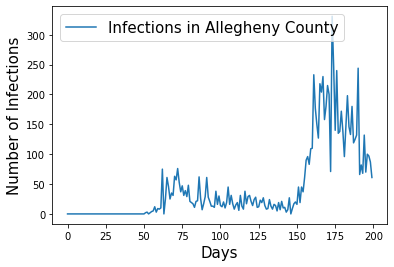}
     \includegraphics[scale=0.45]{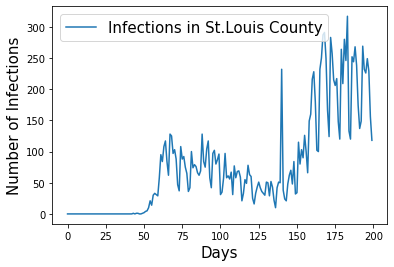}
     \caption{Daily infection rates for Allegheny and St. Louis counties in the first $200$ days starting 2020/1/22. The number of infections over time (even beyond the dates shown here) has multiple cycles of high values and low values. }
     \label{fig:exampleCOVID}
 \end{figure}

\vspace{-0.5cm}

\subsection{Contributions and Comparison with Existing Literature}
\label{sec:LitReviewContributions}
In this paper, we develop robust optimal algorithms for QCD when the post-change law is non-stationary and unknown. 

In the QCD literature, optimal algorithms are developed when the post-change distribution is unknown under three categories: 1) generalized likelihood ratio (GLR) tests are developed where the unknown post-change parameter is replaced by its maximum likelihood estimate (\cite{lord-amstat-1971, lai-ieeetit-1998, tart-book-2019, tart-niki-bass-2014}), 2) mixture-based tests are developed where a prior model is assumed on the post-change parameters and the likelihood ratio is integrated over this prior (\cite{lai-ieeetit-1998, poll-astat-1987, tart-book-2019, tart-niki-bass-2014}), and 3) robust tests are developed where the optimal tests are designed using the \rcn{least favorable law} (\cite{unni-etal-ieeeit-2011, oleyaeimotlagh2023quickest}). Of these three approaches, only the robust approach leads to statistics that can be calculated recursively. 

Optimal algorithms for non-stationary post-change models are developed in \cite{lai-ieeetit-1998, tart-book-2019, tart-niki-bass-2014, tart-veer-siamtpa-2005} under non-i.i.d. setting, and in \cite{brucks2023modeling, liang2022quickest} in independent setting. While GLR and mixture approaches have been taken in these works, a result with a robust approach is not available in the literature. 

A salient feature of our work is that this is the first work where \textit{exactly} optimal tests are developed in non-stationary and unknown distribution settings (see Section~\ref{sec:exactrobustopt}). Exactly optimal solutions for unknown post-change distributions are developed in \cite{unni-etal-ieeeit-2011}. However, there the post-change model is i.i.d. (hence stationary). Exactly optimal algorithms are developed in a non-stationary setting in \cite{bane-tit-2021}. However, the post-change model is assumed known. 
Both non-stationary and unknown distributions are considered in \cite{oleyaeimotlagh2023quickest}. But, the non-stationary model is limited to a statistically periodic model. In this paper, we consider a much broader (almost arbitrary) class of non-stationary processes. Exact robust optimality for non-stationary processes is also considered in \cite{molloy2018minimax}. However, the notion of the least favorable law (LFL) is stationary with time. In this paper, we allow the LFL to also change with time. This allows for the study of robustness when an exactly optimal solution is obtained in a non-stationary setting; see \cite{bane-tit-2021}. 

We also develop asymptotic optimality theory for non-stationary and unknown post-change models. 
For both exact and asymptotic optimality theory, we consider stochastic boundedness assumptions similar to \cite{unni-etal-ieeeit-2011} and \cite{oleyaeimotlagh2023quickest}. In this regard, we note that asymptotic optimality theories have also been developed under weaker stochastic boundedness assumptions in \cite{molloy2017misspecified} and \cite{liang2022non}. Also, see \cite{xie2023distributionally} for another approach. Since we use a unified framework to develop both exact and asymptotic optimality theory, and especially because of our interest in exact optimality, we consider the stronger stochastic boundedness assumptions of \cite{unni-etal-ieeeit-2011} and \cite{oleyaeimotlagh2023quickest}.

\section{Robust Optimality in a Bayesian Setting}\label{sec:Bayes}

Our change point model is as specified in \eqref{eq:changepointmodel_0}. Specifically, we assume that we observe a sequence of independent random variables $\{X_n\}$ over time. Before a point $\nu$, called a change point in the following, the process is i.i.d. with density $f$. After time $\nu$, the law of the process changes to a sequence of densities $\{g_{n}\}$. Mathematically, 
\begin{equation}\label{eq:changepointmodel}
	X_n \sim
	\begin{cases}
		f, &\quad \forall n < \nu, \\
		g_{n}, &\quad \forall n \geq \nu.
	\end{cases}
\end{equation}
Thus, the post-change distribution may depend on the observation time $n$. 
The post-change densities are assumed to satisfy
\begin{equation*}
	D(g_{n} \; \| \; f) > 0, \quad \forall n \geq 1, \;   \nu \geq 1, \; n \geq \nu, 
\end{equation*}
where $D(g \; \| \; f) = \int g(x) \log [g(x)/f(x)] dx$ is the Kullback-Leibler divergence between densities $g$ and $f$. As mentioned in the introduction, with a slight abuse of notation (borrowing it from the time-series literature (\cite{shumway2010time})), we will refer to a process with densities $\{g_{n}\}$ as a non-stationary process since the density of the data evolves or changes with time.
Below, we use the notation $G = \{g_{n}\}$ to denote the post-change sequence of densities.

For the problem of robust quickest change detection in non-stationary processes, we assume that the post-change law $G$  is unknown. However, there are families of distributions $\{\mathcal{P}_{n}\}$ such that
$$
g_{n} \in \mathcal{P}_{n}, \quad n, \nu = 1,2, \dots,
$$
and the families $\{\mathcal{P}_{n}\}$ are known to the decision maker. Below, we use the notation
$$
\mathcal{G} = \{G = \{g_{n}\}: g_{n} \in \mathcal{P}_{n}, \; n, \nu \geq 1\}
$$
to denote the set of all possible post-change laws.

Let $\tau$ be a stopping time for the process $\{X_n\}$, i.e., a positive integer-valued random variable such that the event $\{\tau \leq n\}$ belongs
to the $\sigma$-algebra generated by $X_1, \cdots, X_n$. In other words, whether or not $\tau \leq n$ is completely determined by the first $n$
observations. We declare that a change has occurred at the stopping time $\tau$. To find the best stopping time to detect the change in distribution, we
need a performance criterion. Towards this end, we model the change point $\nu$ as a random variable with a prior distribution given by
$$
\pi_n = \Prob(\nu =n), \quad n = 1, 2, \cdots.
$$
For each $n \in \mathbb{N}$, we use $\Prob_n^{G}$ to denote the law of the observation process $\{X_n\}$ when the change occurs at $\nu=n$ and the post-change law is given by $G$. We use $\Expect_n^{G}$ to denote the corresponding expectation. The notations $\Prob_\infty^{G}=\Prob_\infty$ and $\Expect_\infty^{G}=\Expect_\infty$ are used when there is no change (suggesting a change occurring at $\nu=\infty$). Note that in case of no change, the superscript $G$ can be dropped from the notation as the post-change law plays no role. 
Using these notations, we define the average probability measure
$$
\Prob^{\pi,G} = \sum_{n=1}^\infty \pi_n \; \Prob_n^G.
$$
To capture a penalty for the false alarms,
in the event that the stopping time occurs before the change,
we use the probability of a false alarm defined as
$$
\Prob^{\pi,G}(\tau < \nu).
$$
Note that the probability of a false alarm
$\Prob^{\pi,G}(\tau < \nu)$ is not a function of the post-change law $G$. This is because
$$
    \Prob^{\pi, G}(\tau < \nu) = \sum_{n=1}^\infty \pi_n \Prob_n^G(\tau < n) = \sum_{n=1}^\infty \pi_n \Prob_\infty(\tau < n), 
    $$
    where the last equality follows because $\{\tau < n\}$ is a function of $X_1, \dots, X_{n-1}$, and the law of $X_1, \dots, X_{n-1}$ is the same under both $\Prob_n^G$ and $\Prob_\infty$. 
Hence, in the following, we suppress the mention of $G$ and refer to the probability of false alarm only by
$$
\Prob^{\pi}(\tau < \nu).
$$
To penalize the detection delay, we use the average detection delay given by
$$
\Expect^{\pi,G}\left[(\tau - \nu)^+\right],
$$
where $x^+ = \max\{x, 0\}$. 
\rcn{The classical QCD formulation in this Bayesian setting is given by (\cite{shir-siamtpa-1963})
\begin{equation}\label{eq:QCDproblem1}
    \inf_{\tau \in \textup{C}_\alpha} 
    \Expect^{\pi} \left[(\tau - \nu )^+ \right],
\end{equation}
where the constraint set is the following
$$
\textup{C}_\alpha = \left\{\tau:  \Prob^{\pi}(\tau < \nu) \leq \alpha\right\},
$$
and $\alpha$ is a given constraint on the probability of a false alarm.
However, since we assume that the decision maker is not aware of post-change law $G$, the optimization problem we are interested in solving is}
\begin{equation}\label{eq:robustProb}
	\inf_{\tau \in \textup{C}_\alpha} \;\; \sup_{G \in \mathcal{G}} \; \Expect^{\pi,G}\left[(\tau - \nu)^+\right].
\end{equation}
We say that a solution is robust optimal (in the Bayesian setting) if a stopping time or algorithm is a solution to the problem in \eqref{eq:robustProb}.

We now provide the optimal or asymptotically optimal solution to \eqref{eq:robustProb} under assumptions on the families of post-change uncertainty classes $\{\mathcal{P}_{n}\}$. Specifically, we extend the results in \cite{unni-etal-ieeeit-2011} for i.i.d. processes and those in \cite{oleyaeimotlagh2023quickest} for independent and periodically identically distributed (i.p.i.d.) processes 
to non-stationary processes.
We assume in the rest of this section that all densities involved are equivalent to each other (absolutely continuous with respect to each other). 

To state the assumptions on $\{\mathcal{P}_{n}\}$, we need some definitions. We say that a random variable $Z_2$ is stochastically larger than another random variable $Z_1$ if
$$
\Prob(Z_2 \geq t) \geq \Prob(Z_1 \geq t), \quad \forall t \in \mathbb{R}.
$$
We use the notation
$$
Z_2 \succ Z_1.
$$
If $\mathcal{L}_{Z_2}$ and $\mathcal{L}_{Z_1}$ are the probability laws of $Z_2$ and $Z_1$, then we also use the notation
$$
\mathcal{L}_{Z_2} \succ \mathcal{L}_{Z_1}.
$$
We now introduce the notion of stochastic boundedness in non-stationary processes. In the following, we use
$$
\mathcal{L}(\phi(X), g)
$$
to denote the law of some function $\phi(X)$ of the random variable $X$, when the variable $X$ has density $g$.

\medspace
\medspace
\medspace
\medspace
\medspace

\begin{definition}[Stochastic Boundedness in Non-Stationary Processes; Least Favorable Law (LFL)]
	We say that the family $\{\mathcal{P}_{n}\}$ is stochastically bounded by the sequence
	$$
	\bar{G}=\{\bar{g}_{n}\},
	$$
	and call $\bar{G}$ the least favorable law (LFL), if
	$$
	\bar{g}_{n} \in \mathcal{P}_{n}, \quad n=1,2, \dots,
	$$
	and
	\begin{equation}
 \label{eq:stocbounded}
		\begin{split}
			\mathcal{L}\left(\log \frac{\bar{g}_{n}(X_n)}{f(X_n)}, g_{n}\right) &\succ 	\mathcal{L}\left(\log \frac{\bar{g}_{n}(X_n)}{f(X_n)},
			\bar{g}_{n}\right), \quad \forall  g_{n} \in \mathcal{P}_{n}, \quad n, \nu \geq 1, \quad n \geq \nu
		\end{split}
	\end{equation}
\end{definition}

Let
\begin{align}
    \label{eqn:geneShir}
     \bar{R}_n = \frac{1}{\Prob(\nu > n)}\sum_{k=1}^n \pi_k \; \prod_{i=k}^n \frac{\bar{g}_{i}(X_i)}{f(X_i)} 
\end{align}
be the generalized Shiryaev statistic for the change point model \eqref{eq:changepointmodel} and for the LFL $\bar{G}$. If the change point is a geometrically distributed random variable with parameter $\rho$, then the statistic becomes
\begin{align}
    \label{eqn: Rn}
     \bar{R}_n = \frac{1}{(1-\rho)^n}\sum_{k=1}^n (1-\rho)^{k-1}\rho \; \prod_{i=k}^n \frac{\bar{g}_{i}(X_i)}{f(X_i)}.
\end{align}
Also let $\bar{\tau}^*$ be the generalized Shiryaev stopping rule
\begin{equation}\label{eq:LFLshir}
	\bar{\tau}^* = \inf \{n \geq 1: \bar{R}_n \geq \bar{A}_{n,\alpha} \}. 
\end{equation}
Here thresholds $\bar{A}_{n,\alpha}$ are chosen to satisfy the constraint on the probability of a false alarm. 

We note that additional assumptions are needed on the densities $\{\bar{g}_{n}\}$ and the thresholds $\{\bar{A}_{n,\alpha}\}$ to guarantee optimality or asymptotic optimality of the Shiryaev stopping rule \eqref{eq:LFLshir}. Of special interest is the scenario when
$$
\bar{g}_{n} = \bar{g}, \quad \forall n, \nu \geq 1, \quad n \geq \nu 
$$
In this case, the LFL corresponds to an i.i.d. process with the law $\bar{g}$. Furthermore, if the change point is geometric and the threshold $\bar{A}_{n,\alpha} = \bar{A}_\alpha$ is selected such that the probability of a false alarm is exactly equal to $\alpha$, then the stopping rule $\bar{\tau}^*$ is exactly optimal when $G=\bar{G}$. In addition to this, the statistic $\bar{R}_n$ also has a recursive implementation. More conditions on exact and asymptotic optimality are discussed in Sections~\ref{sec:exactrobustopt} and Section~\ref{sec:AsymptoticOpt}. For some general conditions, see \cite{tart-veer-siamtpa-2005, tart-niki-bass-2014, tart-book-2019, bane-tit-2021}.  

The main result of this section is Theorem~\ref{thm:LFLRobust} below where we show that if the generalized Shiryaev algorithm is optimal for \rcn{\eqref{eq:QCDproblem1} with $G$ being} the LFL, then it is robust optimal for the problem in \eqref{eq:robustProb}. For the theorem, we need the following lemma from \cite{unni-etal-ieeeit-2011} which we reproduce here for readability.
\begin{lemma}[\cite{unni-etal-ieeeit-2011}]
\label{lem:stocbound_UV}
    Suppose $\{U_i: 1 \leq i \leq n\}$ is a set of mutually independent random variables, and $\{V_i: 1 \leq i \leq n\}$ is another set of mutually independent random variables such that $U_i \succ V_i$, $1 \leq i \leq n$. Now let $q: \mathbb{R}^n \to \mathbb{R}$ be a continuous real-valued function defined on $\mathbb{R}^n$ that satisfies
    \begin{align*}
        q(X_1, \dots, X_{i-1}, a, X_{i+1}, \dots, X_{n}) \geq q(X_1, \dots, X_{i-1}, X_i, X_{i+1}, \dots, X_{n})
    \end{align*}
    for all $(X_1, \dots, X_n) \in \mathbb{R}^n$, $a > X_i$, and $i \in \{1, \dots, n\}$. Then we have 
    \begin{align*}
        q(U_1, U_2, \dots, U_n) \succ q(V_1, V_2, \dots, V_n).
    \end{align*}
\end{lemma}


\begin{theorem}
\label{thm:LFLRobust}
	Suppose the following conditions hold:
	\begin{enumerate}
		\item[(a)] 	The family $\{\mathcal{P}_{n}\}$ is stochastically bounded by the  law
		$
		\bar{G}= \{\bar{g}_{n}\}.
		$
  Thus, $\bar{G}=\{\bar{g}_{n}\}$ is the LFL. 
		\item[(b)] Let $\alpha \in (0,1)$ be a constraint such that
		$
		\Prob^\pi(\bar{\tau}^* < \nu) = \alpha,
		$
		where $\bar{\tau}^*$ is the optimal rule designed using the LFL \eqref{eq:LFLshir}.
		\item[(c)] All likelihood ratio functions involved are continuous.
  \end{enumerate}
  Then the following results are true:
  \begin{enumerate}
    \item \rcn{If the stopping rule $\bar{\tau}^*$ in \eqref{eq:LFLshir} is exactly optimal for the problem in \eqref{eq:QCDproblem1} when the post-change law is LFL, namely $\{\bar{G}\}$}, then the stopping rule is exactly robust optimal for the problem in \eqref{eq:robustProb}.
    \item \rcn{If the stopping rule $\bar{\tau}^*$ in \eqref{eq:LFLshir} is asymptotically optimal for the problem in \eqref{eq:QCDproblem1} when the post-change law is LFL, namely $\{\bar{G}\}$}, then the stopping rule is asymptotically robust optimal for the problem in \eqref{eq:robustProb}.
	\end{enumerate}
	
\end{theorem}

\begin{proof}
	The key step in the proof is to show that for each $k \in \mathbb{N}$, $G \in \mathcal{G}$,
 \begin{equation}\label{eq:keystep}
		\begin{split}
			\Expect_k^{\bar{G}}&\left[(\bar{\tau}^* - k)^+ | \mathcal{F}_{k-1}\right] \geq \Expect_k^{G}\left[(\bar{\tau}^* - k)^+ | \mathcal{F}_{k-1}\right], 
		\end{split}
	\end{equation}
 where $\mathcal{F}_{k-1}$ is the $\sigma$-algebra generated by $X_1, \dots, X_{k-1}$. 
	If the above statement is true, then by taking expectation 
 and then taking an average over the prior on the change point, we get
	\begin{equation}
		\begin{split}
			\Expect^{\pi,\bar{G}}\left[(\bar{\tau}^* - \nu)^+\right] &=  \sum_k \pi_k \Expect_k^{\bar{G}}\left[(\bar{\tau}^* - k)^+\right] \\
			&\geq \sum_k \pi_k \Expect_k^{G}\left[(\bar{\tau}^* - k)^+ \right] =	\Expect^{\pi,G}\left[(\bar{\tau}^* - \nu)^+ \right], \quad \forall G \in \mathcal{G}.
		\end{split}
	\end{equation}
	The last equation gives
	\begin{equation}
		\begin{split}
			\Expect^{\pi,\bar{G}}&\left[(\bar{\tau}^* - \nu)^+\right]  \geq \Expect^{\pi,G}\left[(\bar{\tau}^* - \nu)^+ \right], \quad \forall G \in \mathcal{G}.
		\end{split}
	\end{equation}
	This implies that
	\begin{equation}
		\begin{split}
			\Expect^{\pi,\bar{G}}&\left[(\bar{\tau}^* - \nu)^+\right]  = \sup_{G \in \mathcal{G}} \Expect^{\pi,G}\left[(\bar{\tau}^* - \nu)^+ \right],
		\end{split}
	\end{equation}
 where we have equality because the law $\bar{G}$ belongs to the family considered on the right. 
	Now, if $\tau$ is any stopping rule satisfying the probability of false alarm constraint of $\alpha$, then since $\bar{\tau}^*$ is the optimal test for the LFL $\bar{G}$, we have 
	\begin{equation}
		\begin{split}
			\sup_{G \in \mathcal{G}} \Expect^{\pi,G}\left[(\tau - \nu)^+ \right] &\geq \Expect^{\pi,\bar{G}}\left[(\tau - \nu)^+\right]
			\geq
			\Expect^{\pi,\bar{G}}\left[(\bar{\tau}^* - \nu)^+\right]  (1+o^*(1))\\
			&= \sup_{G \in \mathcal{G}} \Expect^{\pi,G}\left[(\bar{\tau}^* - \nu)^+ \right] (1+o^*(1)).
		\end{split}
	\end{equation}
        Here the term $o^*(1)$ is defined as 
        \begin{equation}
            o^*(1) = 
            \begin{cases}
                0, &\quad \text{if $\bar{\tau}^*$ is exactly optimal,}\\
                o(1), &\quad \text{if $\bar{\tau}^*$ is asymptotically optimal}.\\
            \end{cases}
        \end{equation}
         Thus, if the rule $\bar{\tau}^*$ is exactly optimal \rcn{for \eqref{eq:QCDproblem1} with post-change $\bar{G}$}, then the $o^*(1)$ terms will be identically zero, and we will get exact robust optimality of $\bar{\tau}^*$ for the problem in \eqref{eq:robustProb}. If the rule $\bar{\tau}^*$ is asymptotically optimal \rcn{for \eqref{eq:QCDproblem1} with post-change $\bar{G}$}, then the $o^*(1)$ terms is the same as $o(1)$ which goes to zero in the limit $\alpha \to 0$. 
	
We now prove the key step \eqref{eq:keystep}. Towards this end, we prove that for every integer $N\geq 0$ and $k \geq 1$, 
\begin{equation}\label{eq:keystep2}
		\begin{split}
			\Prob_k^{\bar{G}}\left[(\bar{\tau}^* - k)^+ > N |\mathcal{F}_{k-1}\right] &\geq \Prob_k^{G}\left[(\bar{\tau}^* - k)^+ > N |\mathcal{F}_{k-1}\right], \quad  \forall G \in \mathcal{G}.
		\end{split}
	\end{equation}
        Equation \eqref{eq:keystep2} implies \eqref{eq:keystep} because through \eqref{eq:keystep2} we would establish that $(\bar{\tau}^* - k)^+$ is stochastically bigger under the conditional distribution when the law is $\bar{G}$ and the conditioning is on $\mathcal{F}_{k-1}$. In fact, a summation over $N$ in \eqref{eq:keystep2} would imply \eqref{eq:keystep}. 
        
To prove \eqref{eq:keystep2}, we show that for $N \geq 0$ and $k \geq 1$, 
 \begin{equation}\label{eq:keystep3}
		\begin{split}
			\Prob_k^{\bar{G}}\left[(\bar{\tau}^* - k)^+ \leq N |\mathcal{F}_{k-1}\right] &\leq \Prob_k^{G}\left[(\bar{\tau}^* - k)^+ \leq N |\mathcal{F}_{k-1}\right], \quad  \forall G \in \mathcal{G}.
		\end{split}
	\end{equation}
First note that the events $\{(\bar{\tau}^* - k)^+ \leq N\}$ and $\{(\bar{\tau}^* - k) \leq N\}$ are identical:
\begin{equation}
\label{eq:eventequiv_1}
    \begin{split}
       \{(\bar{\tau}^* - k)^+ \leq N\} &= \left[ \{(\bar{\tau}^* - k)^+ \leq N\} \cap \{\bar{\tau}^* \geq k \}\right] \cup \left[ \{(\bar{\tau}^* - k)^+ \leq N\} \cap \{\bar{\tau}^* < k \}\right]\\
       &= \left[ \{(\bar{\tau}^* - k) \leq N\} \cap \{\bar{\tau}^* \geq k \}\right] \cup \left[ \{(\bar{\tau}^* - k)^+ \leq N\} \cap \{\bar{\tau}^* < k \}\right]\\
       &\stackrel{(a)}{=} \left[ \{(\bar{\tau}^* - k) \leq N\} \cap \{\bar{\tau}^* \geq k \}\right] \cup \left[  \{\bar{\tau}^* < k \}\right]\\
       &\stackrel{(b)}{=} \left[ \{(\bar{\tau}^* - k) \leq N\} \cap \{\bar{\tau}^* \geq k \}\right] \cup \left[ \{(\bar{\tau}^* - k) \leq N\} \cap \{\bar{\tau}^* < k \}\right]\\
       &=\{(\bar{\tau}^* - k) \leq N\}.
    \end{split}
\end{equation}
Here the equalities $(a)$ and $(b)$ follow because for $N \geq 0$,
\begin{equation*}
    \begin{split}
        \{\bar{\tau}^* < k \}  &\subset \{(\bar{\tau}^* - k)^+ \leq N\}, \\
        \{\bar{\tau}^* < k \}  &\subset \{(\bar{\tau}^* - k) \leq N\}.
    \end{split}
    \end{equation*}
This equivalence of events implies that
	\begin{equation}\label{eq:temp1}
		\begin{split}
			\Prob_k^{\bar{G}}\left[(\bar{\tau}^* - k)^+ \leq N |\mathcal{F}_{k-1}\right] 
			&= \Prob_k^{\bar{G}}\left[\bar{\tau}^* \leq k+N |\mathcal{F}_{k-1}\right] \\
			& = \Prob_k^{\bar{G}}\left[l(X_1, X_2, \dots, X_{k+N}) \; \geq \; 0 \; | \; \mathcal{F}_{k-1}\right],
		\end{split}
	\end{equation}
where the function $l(z_1, z_2, \dots, z_{N})$ is given by
	\begin{equation}
		\begin{split}
			l(z_1, z_2, \dots, z_N) 
			&= \max_{1 \leq n \leq N} \left(\frac{1}{\Prob(\nu > n)}\sum_{k=1}^n \pi_k \; \exp \left(\sum_{i=k}^n \log [\bar{g}_{i}(z_{i})/f(z_i)]\right)-  \bar A_{n, \alpha}\right).
		\end{split}
	\end{equation}
Thus, $l(X_1, X_2, \dots, X_{k+N}) \geq 0$ in (\ref{eq:temp1}) means that the Shiryaev statistics defined in \eqref{eqn:geneShir} exceeds the sequence of thresholds $\{ \bar A_{n, \alpha}\}$ before time $k + N$.
Using the stochastic boundedness assumption, for all $g_{n} \in \mathcal{P}_{n}, n, \nu=1,2, \dots, n \geq \nu$, 
 \begin{equation}\label{eq:stocboundcond_again}
		\begin{split}
			\mathcal{L}\left(\log \frac{\bar{g}_{n}(X_n)}{f(X_n)}, g_{n}\right) &\succ 	\mathcal{L}\left(\log \frac{\bar{g}_{n}(X_n)}{f(X_n)},
			\bar{g}_{n}\right),
		\end{split}
	\end{equation}
 and the fact that $l(z_1, z_2, \dots, z_{N})$ is continuous and increasing in $\log [\bar{g}_{i}(z_{i})/f(z_i)]$, we have by Lemma~\ref{lem:stocbound_UV} that  
	\begin{equation}\label{eq:temp3}
		\begin{split}
			\Prob_k^{\bar{G}}\left[(\bar{\tau}^* - k)^+ \leq N \; | \; \mathcal{F}_{k-1}\right] 
			 &= \Prob_k^{\bar{G}}\left[\bar{\tau}^* \leq k+N \; |\; \mathcal{F}_{k-1}\right] \\
			& = \Prob_k^{\bar{G}}\left[l(X_1, X_2, \dots, X_{k+N}) \geq 0 \; |\; \mathcal{F}_{k-1}\right] \\
			&\stackrel{(c)}{\leq} \Prob_k^{{G}}\left[l(X_1, X_2, \dots, X_{k+N}) \geq 0 \; |\; \mathcal{F}_{k-1}\right] \\ 
   &= \Prob_k^{G}\left[\bar{\tau}^* \leq k+N \; |\; \mathcal{F}_{k-1}\right] \\
			&=\Prob_k^{{G}}\left[(\bar{\tau}^* - k)^+ \leq N \; |\; \mathcal{F}_{k-1}\right], \quad  \forall G \in \mathcal{G}.
		\end{split}
	\end{equation}
In the above equation, the inequality $(c)$ is true because the conditioning on the change point and the past realization $\mathcal{F}_{k-1}$ fixes the first $i = 1, \dots, k-1$ coordinates of $\{\log [\bar{g}_{i}(z_{i})/f(z_i)]\}$. Also, the rest of the coordinates of $\{\log [\bar{g}_{i}(z_{i})/f(z_i)]\}$ are independent, with their law being $\bar{G}$ to the left of inequality, and $G$ to the right.  We can now invoke the stochastic boundedness condition \eqref{eq:stocboundcond_again}. 
	This proves \eqref{eq:keystep2} and hence \eqref{eq:keystep}.
\end{proof}

\begin{remark}
   We note that to ensure the exact optimality of the generalized Shiryaev algorithm, we may have to assume that the change point is a geometrically distributed random variable. 
\end{remark}
We discuss special cases of this result and examples in Section~\ref{sec:exactrobustopt} to Section~\ref{sec:LFDexamples}.


\section{Robust Optimality in a Minimax Setting}
\label{sec:minimax}
In this section, we obtain a solution for a robust version of the minimax problem of Lorden (\cite{lord-amstat-1971}). We have the same change point model as in Section~\ref{sec:Bayes}. But, here we assume that the change point $\nu$ is an unknown constant.
\rcn{The standard Lorden formulation of the quickest change detection is:
\begin{equation}\label{eq:QCDproblem2}
    \inf_{\tau \in \textup{D}_\alpha} \;\; 
    \sup_\nu \; \esssup \; \Expect_\nu \left[(\tau - \nu +1)^+ | \mathcal{F}_{\nu-1}\right],
\end{equation}
where $\esssup X$ represents the smallest constant $C$ such that $\Prob(X \leq C)=1$, and 
\begin{equation*}
\textup{D}_\alpha = \left\{\tau: \Expect_\infty[\tau] \geq \frac{1}{\alpha}, \right\},
\end{equation*}
with $1/\alpha$ being a constraint on the meantime to a false alarm for $\alpha \in (0,1)$.
However, we assume that the decision maker is not aware of post-change law $G$. Thus, the optimization problem we are interested in solving is  }
\begin{equation}\label{eq:robustProbmini}
	\inf_{\tau \in \textup{D}_\alpha} \;\; \sup_{G \in \mathcal{G}} \; \; \sup_\nu \; \esssup \; \Expect_\nu^{G}\left[(\tau - \nu + 1)^+ | \mathcal{F}_{\nu-1}\right].
\end{equation}
\rcn{We say that a solution is robust optimal (in the minimax setting) if a stopping time or algorithm is a solution to the problem in \eqref{eq:robustProbmini}}. 

\rc{In the rest of the paper, we use the notation
\begin{equation}
\label{eqn:cusum stat}
    \text{WADD}^G(\tau) \coloneqq \sup_\nu \; \esssup \; \Expect_\nu^{G}\left[(\tau - \nu + 1)^+ | \mathcal{F}_{\nu-1}\right].
\end{equation}}

Let
$$
 \rc{\bar{W}_n} = \max_{1 \leq k \leq n} \sum_{i=k}^n \log \frac{\bar{g}_{i}(X_i)}{f(X_i)} 
$$
be the generalized CUSUM statistic for the LFL $\bar{G}$. 
Also let $\bar{\tau}_c^*$ be the stopping rule
\begin{equation}\label{eq:LFLCUSUM}
	\bar{\tau_c}^* = \inf \{n \geq 1: \rc{\bar{W}_n} \geq \rc{\bar B_{n,\alpha}} \}. 
\end{equation}
Here thresholds \rc{$\bar B_{n,\alpha}$} are chosen to satisfy the constraint on the mean time to a false alarm. We again note that additional assumptions are needed on the densities $\{\bar{g}_{n}\}$ to guarantee optimality or asymptotic optimality of the generalized CUSUM stopping rule \eqref{eq:LFLCUSUM}. For some general conditions, see \cite{lai-ieeetit-1998, tart-niki-bass-2014, tart-book-2019, liang2022quickest}.  

We now state \rc{and prove} our main result on robust minimax quickest change detection in non-stationary processes. 


\begin{theorem}
\label{thm:LFLRobust_minimax}
	\rc{Suppose the following conditions hold:
	\begin{enumerate}
		\item[(a)] 	The family $\{\mathcal{P}_{n}\}$ is stochastically bounded by the  law
		$
		\bar{G}= \{\bar{g}_{n}\}.
		$
  Thus, $\bar{G}=\{\bar{g}_{n}\}$ is the LFL. 
		\item[(b)] Let $\alpha \in (0,1)$ be a constraint such that
		$
		\Expect_\infty[\bar{\tau}_c^*] = \frac{1}{\alpha},
		$
		where $\bar{\tau}_c^*$ is the optimal rule designed using the LFL.
		\item[(c)] All likelihood ratio functions involved are continuous.
  \end{enumerate}}
\rc{Then the following results are true:
  \begin{enumerate}
		\item If the stopping rule $\bar{\tau}_c^*$ in \eqref{eq:LFLCUSUM} is exactly optimal \rcn{for \eqref{eq:QCDproblem2} when the post-change law is LFL, namely $\{\bar{G}\}$,} then the stopping rule is exactly robust optimal for the problem in \eqref{eq:robustProbmini}.
 \item If the stopping rule $\bar{\tau}_c^*$ in \eqref{eq:LFLCUSUM} is asymptotically optimal \rcn{for \eqref{eq:QCDproblem2} when the post-change law is LFL, namely $\{\bar{G}\}$,} then the stopping rule is asymptotically robust optimal for the problem in \eqref{eq:robustProbmini}.
	\end{enumerate}}
	
\end{theorem}

\begin{proof}
\rc{The proof is similar to the proof of Theorem~\ref{thm:LFLRobust}. We provide proof for completeness. }

The key step in the proof is to show that for each $k \in \mathbb{N}$, $G \in \mathcal{G}$,
 \begin{equation}\label{eq:keystep3_1}
		\begin{split}
			\Expect_k^{\bar{G}}&\left[(\bar{\tau}_c^* - k + 1)^+ | \mathcal{F}_{k-1}\right] \geq \Expect_k^{G}\left[(\bar{\tau}_c^* - k+ 1)^+ | \mathcal{F}_{k-1}\right]. 
		\end{split}
	\end{equation}
 To prove the above statement, we need to show that for every integer $N\geq 0$,
	\begin{equation}\label{eq:keystep4}
		\begin{split}
			\Prob_k^{\bar{G}}\left[(\bar{\tau}_c^* - k+ 1)^+ > N |\mathcal{F}_{k-1}\right] &\geq \Prob_k^{G}\left[(\bar{\tau}_c^* - k+ 1)^+ > N |\mathcal{F}_{k-1}\right],  \quad \forall G \in \mathcal{G}.
		\end{split}
	\end{equation}
	\rc{We prove this inequality for $N=0$ and $N \geq 1$ separately. We note that this approach was not taken in the proof of Theorem~\ref{thm:LFLRobust} because the delay was penalized as $(\tau-k)^+$. For $N=0$, note that 
 $$
 \{(\bar{\tau}_c^* - k + 1)^+ > 0\} = \{\bar{\tau}_c^* - k + 1 > 0\} = \{\bar{\tau}_c^* >  k- 1\} = \{\bar{\tau}_c^* \leq  k- 1\}^c. 
 $$
 Since the event $\{\bar{\tau}_c^* \leq  k- 1\}$ is $\mathcal{F}_{k-1}$ measurable, both sides of the inequality \eqref{eq:keystep4} are equal to the indicator of the event $\{(\bar{\tau}_c^* - k + 1)^+ > N\}$. }
 
We now prove \eqref{eq:keystep4} for $N \geq 1$. 
 \rc{We first have (see \eqref{eq:eventequiv_1})}
	\begin{equation}\label{eq:temp1-1}
		\begin{split}
			\Prob_k^{\bar{G}}\left[(\bar{\tau}_c^* - k {+ 1})^+ \leq N |\mathcal{F}_{k-1}\right] 
			&= \Prob_k^{\bar{G}}\left[\bar{\tau}_c^* \leq k+N {-1} |\mathcal{F}_{k-1}\right] \\
			& = \Prob_k^{\bar{G}}\left[h(X_1, X_2, \dots, X_{k+N{-1}}) \; \geq \; 0 \; | \; \mathcal{F}_{k-1}\right],
		\end{split}
	\end{equation}
	where now
 the function $h(z_1, z_2, \dots, z_{N})$ is given by
	\begin{equation}
		\begin{split}
			h&(z_1, z_2, \dots, z_N) 
			= \max_{1 \leq n \leq N} \left(\max_{1 \leq k \leq n} \left(\sum_{i=k}^n \log \frac{\bar{g}_{i}(z_{i})}{f(z_i)}\right)- \rc{\bar B_{n, \alpha}}\right).
		\end{split}
	\end{equation}
\rc{Thus, $h(X_1, X_2, \dots, X_{k+N{-1}}) \geq 0$ in (\ref{eq:temp1-1}) means that the CUSUM statistics defined in (\ref{eqn:cusum stat}) exceeds the sequence of thresholds $\{ \bar B_{n, \alpha}\}$ before time $k + N{-1}$.}
 
 Again, using the stochastic boundedness assumption,
 	\begin{equation*}
		\begin{split}
			\mathcal{L}\left(\log \frac{\bar{g}_{n}(X_n)}{f(X_n)}, g_{n}\right) &\succ 	\mathcal{L}\left(\log \frac{\bar{g}_{n}(X_n)}{f(X_n)},
			\bar{g}_{n}\right), \quad \forall g_{n} \in \mathcal{P}_{n}, \quad n=1,2, \dots, \quad n \geq \nu
		\end{split}
	\end{equation*}
 and the fact that even $h(z_1, z_2, \dots, z_{N})$ is continuous and increasing in $\log [\bar{g}_{i}(z_{i})/f(z_i)]$, we have by Lemma~\ref{lem:stocbound_UV} that (see comments regarding inequality $(c)$ in \eqref{eq:temp3})
	\begin{equation}\label{eq:temp4}
		\begin{split}
			\Prob_k^{\bar{G}}\left[(\bar{\tau}_c^* - k+1)^+ \leq N \; | \; \mathcal{F}_{k-1}\right] 
			 &= \Prob_k^{\bar{G}}\left[\bar{\tau}_c^* \leq k+N-1 \; |\; \mathcal{F}_{k-1}\right] \\
			& = \Prob_k^{\bar{G}}\left[h(X_1, X_2, \dots, X_{k+N-1}) \geq 0 \; |\; \mathcal{F}_{k-1}\right] \\
			&\leq \Prob_k^{{G}}\left[h(X_1, X_2, \dots, X_{k+N-1}) \geq 0 \; |\; \mathcal{F}_{k-1}\right] \\ 
   &= \Prob_k^{G}\left[\bar{\tau}_c^* \leq k+N-1 \; |\; \mathcal{F}_{k-1}\right] \\
			&=\Prob_k^{{G}}\left[(\bar{\tau}_c^* - k+1)^+ \leq N \; |\; \mathcal{F}_{k-1}\right], \quad  \forall G \in \mathcal{G}.
		\end{split}
	\end{equation}
This proves \eqref{eq:keystep4} and hence \rc{\eqref{eq:keystep3_1}}. Thus, we have
 \begin{equation*}
		\begin{split}
			\Expect_k^{\bar{G}}&\left[(\bar{\tau}_c^* - k{+1})^+ | \mathcal{F}_{k-1}\right] \geq \Expect_k^{G}\left[(\bar{\tau}_c^* - k{+1})^+ | \mathcal{F}_{k-1}\right], \quad \forall G \in \mathcal{G}.
		\end{split}
	\end{equation*}
This implies (by the definition of essential supremum) 
 \begin{equation}\label{eq:esssuporder}
		\begin{split}
			\esssup \; \Expect_k^{\bar{G}}&\left[(\bar{\tau}_c^* - k{+1})^+ | \mathcal{F}_{k-1}\right] \geq \esssup \; 
  \Expect_k^{G}\left[(\bar{\tau}_c^* - k{+1})^+ | \mathcal{F}_{k-1}\right], \quad \forall G \in \mathcal{G}.
		\end{split}
	\end{equation}
Since \eqref{eq:esssuporder} is true for every $k$, we have 
 \begin{equation}
		\begin{split}
		\sup_k\; 	\esssup \; \Expect_k^{\bar{G}}&\left[(\bar{\tau}_c^* - k{+1})^+ | \mathcal{F}_{k-1}\right] \geq \sup_k \; \esssup \; 
 \Expect_k^{G}\left[(\bar{\tau}_c^* - k{+1})^+ | \mathcal{F}_{k-1}\right], \;\; \forall G \in \mathcal{G}.
		\end{split}
	\end{equation}
This gives us
 \begin{equation}
		\sup_k\; 	\esssup \; \Expect_k^{\bar{G}}\left[(\bar{\tau}_c^* - k{+1})^+ | \mathcal{F}_{k-1}\right] = \sup_{G \in \mathcal{G}} \; \sup_k \; \esssup \; 
  \Expect_k^{G}\left[(\bar{\tau}_c^* - k{+1})^+ | \mathcal{F}_{k-1}\right],
 \end{equation}
or
 \begin{equation}
 \text{WADD}^{\bar{G}}(\bar{\tau}_c^*)=\sup_{G \in \mathcal{G}} \text{WADD}^{{G}}(\bar{\tau}_c^*).
	\end{equation}
Now, if $\tau$ is any stopping rule satisfying the mean time to false alarm constraint of $1/\alpha$, then since $\bar{\tau}_c^*$ is the optimal test for the LFL $\bar{G}$, we have 
	\begin{equation}
		\begin{split}
			\sup_{G \in \mathcal{G}} \text{WADD}^{{G}}({\tau})
   \geq \text{WADD}^{\bar{G}}({\tau})
   &\geq \text{WADD}^{\bar{G}}(\bar{\tau}_c^*) (1+\rc{o^*(1)})\\
			&= \sup_{G \in \mathcal{G}} \text{WADD}^{{G}}(\bar{\tau}_c^*)(1+\rc{o^*(1)}).
		\end{split}
	\end{equation}
 \rc{Here, as in the proof of Theorem~\ref{thm:LFLRobust}, 
         the term $o^*(1)$ is defined as }
        \begin{equation*}
            o^*(1) = 
            \begin{cases}
                0, &\quad \text{if $\bar{\tau}_c^*$ is exactly optimal,}\\
                o(1), &\quad \text{if $\bar{\tau}_c^*$ is asymptotically optimal}.\\
            \end{cases}
        \end{equation*}
       \rcn{Thus, if the stopping rule $\bar{\tau}_c^*$ in \eqref{eq:LFLCUSUM} is exactly optimal for \eqref{eq:QCDproblem2} when the post-change law is LFL, namely $\{\bar{G}\}$, then the $o^*(1)$ terms will be identically zero, and we will get exact robust optimality of $\bar{\tau}_c^*$ for the problem in \eqref{eq:robustProb}. If the rule $\bar{\tau}_c^*$ is asymptotically optimal for $\bar{G}$, then the $o^*(1)$ terms is the same as $o(1)$ which goes to zero in the limit $\alpha \to 0$. }
 \end{proof}


\section{Exact Robust Optimality in Non-Stationary Processes}
\label{sec:exactrobustopt}
In this section, we discuss two special cases where the robust optimality stated in the previous section is exact. 

When the LFL $\{\bar{g}_{n}\}$ satisfies the special condition that 
\begin{equation}
    \label{eq:robustexactiid}
    \bar{g}_{n} = \bar{g}, \quad \forall n \geq \nu,
\end{equation}
then the robust optimal algorithms reduce to the classical Shiryaev and CUSUM algorithm for the i.i.d. setting and hence are exactly optimal for their corresponding formulations. We note that this result is different from that studied in \cite{unni-etal-ieeeit-2011} because in \cite{unni-etal-ieeeit-2011} the post-change model is also unknown but assumed to be i.i.d., but here the post-change process is allowed to be any non-stationary process. We state this as a corollary.

\begin{corollary}
In Theorem~\ref{thm:LFLRobust} and Theorem~\ref{thm:LFLRobust_minimax}, if the LFL $\{\bar{g}_{n}\}$ satisfies the special condition that 
\begin{equation}
    \label{eq:robustexactiid_1}
    \bar{g}_{n} = \bar{g}, \quad \forall n \geq \nu,
\end{equation}
then the generalized Shiryaev algorithm designed using the LFL reduces to the classical Shiryaev algorithm with post-change density $\bar{g}$ with recursively implementable statistic
\begin{equation}
    \label{eq:ShirRecur_1}
    \rc{\bar{R}_n} = \frac{\rc{\bar{R}_{n-1}} + \rho}{1-\rho} \; \frac{\bar{g}(X_n)}{f(X_n)}, \quad \rc{\bar{R}_0}=0,
\end{equation}
and is exactly optimal for the robust problem stated in \eqref{eq:robustProb}. For the same reasons, the generalized CUSUM algorithm designed using the LFL reduces to the classical CUSUM algorithm with recursively implementable statistic 
\begin{equation}
\label{eq:CUSUMrobust}
    \rc{\bar{W}_n} = \left(\rc{\bar{W}_{n-1}} + \log \frac{\bar{g}(X_n)}{f(X_n)}\right)^+, \quad \quad \rc{\bar{W}_0}=0,
\end{equation}
and is exactly robust optimal for the problem stated in \eqref{eq:robustProbmini}.  

\end{corollary}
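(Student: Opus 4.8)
The plan is to treat this as a direct specialization of Theorem~\ref{thm:LFLRobust} and Theorem~\ref{thm:LFLRobust_minimax}, where the only real work is (i) algebraically collapsing the generalized statistics to their classical i.i.d. forms and (ii) upgrading the ``asymptotically optimal'' conclusion to an \emph{exact} one by invoking the known exact optimality of the classical Shiryaev and CUSUM procedures. First I would substitute the condition \eqref{eq:robustexactiid_1} directly into the generalized Shiryaev statistic. Since $\bar{g}_{i,k} = \bar{g}$ for all $i,k$, the post-change likelihood ratio $\bar{g}_{i,k}(X_i)/f(X_i) = \bar{g}(X_i)/f(X_i)$ no longer depends on the hypothesized change time $k$, so $\prod_{i=k}^{n} \bar{g}_{i,k}(X_i)/f(X_i)$ becomes the ordinary i.i.d. product and $R_n$ reduces to the classical Shiryaev statistic, which admits the recursion \eqref{eq:ShirRecur_1} exactly as in \eqref{eq:ShirRecur}. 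Taking the thresholds $A_{n,\alpha}$ in \eqref{eq:LFLshir} to be a common constant $A$ chosen to meet the false-alarm constraint then identifies $\bar{\tau}^*$ with the classical Shiryaev test \eqref{eq:iidShir}.

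The crux is then condition~(5) of Theorem~\ref{thm:LFLRobust}. Because the change point is geometric and the reduced problem is a genuine i.i.d. Bayesian change detection problem with post-change density $\bar{g}$, the classical Shiryaev rule is \emph{exactly} optimal (not merely asymptotically), as established in \cite{shir-siamtpa-1963, tart-veer-siamtpa-2005}. Hence condition~(5) holds in its exact form, which is precisely the situation in which the closing remark of the proof of Theorem~\ref{thm:LFLRobust} forces the $o(1)$ terms to vanish identically. Applying that theorem therefore yields exact robust optimality of $\bar{\tau}^*$ for \eqref{eq:robustProb}.

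For the minimax claim I would run the identical argument through Theorem~\ref{thm:LFLRobust_minimax}. Substituting \eqref{eq:robustexactiid_1} into $W_n$ removes the $k$-dependence in the increments $\log[\bar{g}_{i,k}(X_i)/f(X_i)] = \log[\bar{g}(X_i)/f(X_i)]$, so the generalized CUSUM statistic becomes the classical CUSUM statistic $\max_{1\le k\le n}\sum_{i=k}^{n}\log[\bar{g}(X_i)/f(X_i)]$, with the max-plus recursion \eqref{eq:CUSUMrobust}. The classical CUSUM is exactly minimax optimal for the Lorden criterion in the i.i.d. setting \cite{mous-astat-1986, lord-amstat-1971}, so condition~(4) of Theorem~\ref{thm:LFLRobust_minimax} again holds exactly and the corresponding $o(1)$ terms vanish, giving exact robust optimality of $\bar{\tau}_c^*$ for \eqref{eq:robustProbmini}.

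I expect the only delicate point to be bookkeeping around exactness rather than any substantive estimate: one must ensure that the stochastic-boundedness hypothesis and all continuity assumptions of the parent theorems are inherited unchanged by the degenerate constant family (they are, since a constant sequence is a valid LFL), and that the threshold in the reduced test is taken constant so that $\bar{\tau}^*$ (respectively $\bar{\tau}_c^*$) coincides with the classical procedure for which the \emph{exact} rather than asymptotic optimality results apply. Once that identification is made, no new inequality is needed; the corollary is exactly Theorems~\ref{thm:LFLRobust} and~\ref{thm:LFLRobust_minimax} specialized to the branch in which their $o(1)$ slack is zero.
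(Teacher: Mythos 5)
Your proposal is correct and follows essentially the same route as the paper, which justifies this corollary by exactly the observation you make: under $\bar{g}_{n,\nu}\equiv\bar{g}$ the generalized statistics collapse to the classical Shiryaev and CUSUM statistics, whose exact optimality (geometric prior for Shiryaev, Lorden/Moustakides for CUSUM) supplies condition (5) of Theorem~\ref{thm:LFLRobust} and condition (4) of Theorem~\ref{thm:LFLRobust_minimax} in exact form, so the $o(1)$ terms in those proofs vanish. Your added bookkeeping about taking the thresholds constant so that $\bar{\tau}^*$ coincides with the classical procedure is a sensible clarification that the paper leaves implicit.
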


Another case where the generalized Shiryaev test is exactly optimal is when the LFL is an independent and periodically identically distributed (i.p.i.d.) process, i.e., 
$$
\bar{g}_{n+T} = \bar{g}_n, \quad \text{for some positive integer }T.
$$
For this case, it is shown in \cite{bane-tit-2021} that the generalized Shiryaev algorithm is exactly optimal for a periodic sequence of thresholds: \rc{$\bar A_{n, \alpha} = \bar A_{n+T, \alpha}$}, for every $n$. We state this result also as a corollary.
\begin{corollary}
    In Theorem~\ref{thm:LFLRobust}, if the LFL $\{\bar{g}_{n}\}$ is an i.p.i.d. process, then the generalized Shiryaev algorithm is exactly robust optimal. In this case, the statistic again has a recursive update 
    \begin{equation}
    \label{eq:ShirRecur_2}
    \rc{\bar{R}_n} = \frac{\rc{\bar{R}_{n-1}} + \rho}{1-\rho} \; \frac{\bar{g}_n(X_n)}{f(X_n)}, \quad \rc{\bar{R}_0}=0.
\end{equation}
\end{corollary}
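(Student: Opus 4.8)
The plan is to obtain this corollary as a direct specialization of Theorem~\ref{thm:LFLRobust}, observing that an i.p.i.d. process is merely a special non-stationary process in which the post-change density $\bar{g}_{n,\nu}=\bar{g}_n$ depends only on the current time index and satisfies the periodicity $\bar{g}_{n+T}=\bar{g}_n$. The entire burden is thus to verify that each hypothesis of Theorem~\ref{thm:LFLRobust} holds in this restricted setting, with the decisive point being that condition~5 holds in its \emph{exact} (not merely asymptotic) form.

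First I would dispose of conditions 1--4. Condition~1 (stochastic boundedness of $\{\mathcal{P}_{n,\nu}\}$ by the LFL) is assumed in the statement; condition~2 (thresholds chosen so the false-alarm probability equals $\alpha$) follows from the usual monotone dependence of $\Prob^\pi(\bar{\tau}^* < \nu)$ on the thresholds; condition~3 (continuity of the likelihood ratios) is a standing assumption of this section; and condition~4 holds because we work in the Bayesian setting with a geometric prior on $\nu$. The crux is condition~5, and here I would invoke \cite{bane-tit-2021}, which establishes that for an i.p.i.d. post-change process the generalized Shiryaev rule is \emph{exactly} optimal for the Bayesian problem of minimizing $\Expect^{\pi,\bar{G}}[(\tau-\nu)^+]$ subject to $\Prob^\pi(\tau<\nu)\le\alpha$, provided the thresholds are taken periodically, $A_{n,\alpha}=A_{n+T,\alpha}$. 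With exact optimality for the LFL secured, Theorem~\ref{thm:LFLRobust} applies with the $o(1)$ terms identically zero, giving exact robust optimality of $\bar{\tau}^*$ for the problem in \eqref{eq:robustProb}.

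It remains to derive the recursion. Writing $L_n=\bar{g}_n(X_n)/f(X_n)$ and using that $\bar{g}_{i,k}=\bar{g}_i$ no longer depends on the change point $k$, the Shiryaev statistic for the LFL becomes
\begin{equation*}
R_n = \frac{1}{(1-\rho)^n}\sum_{k=1}^n (1-\rho)^{k-1}\rho \prod_{i=k}^n L_i.
\end{equation*}
Isolating the $k=n$ summand and factoring $L_n$ out of the remaining terms identifies $\sum_{k=1}^{n-1}(1-\rho)^{k-1}\rho\prod_{i=k}^{n-1}L_i$ as $(1-\rho)^{n-1}R_{n-1}$, which yields $R_n=\frac{R_{n-1}+\rho}{1-\rho}L_n$; this is exactly \eqref{eq:ShirRecur_2}, with the index on the observation inside the likelihood ratio being $n$.

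I expect the only genuine obstacle to be confirming that the exact-optimality theorem of \cite{bane-tit-2021} is stated for precisely the Bayesian objective and false-alarm constraint used here, and that its regularity hypotheses on the periodic densities $\{\bar{g}_n\}$ (finiteness of the relevant Kullback--Leibler divergences, which is guaranteed by \eqref{eq:diffpdfassum}) are met. Everything else is bookkeeping: the corollary is a plug-in of a cited exact-optimality result into the robust reduction already furnished by Theorem~\ref{thm:LFLRobust}, together with the elementary algebraic collapse of the Shiryaev sum into its recursion.
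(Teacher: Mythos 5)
Your proposal is correct and follows the same route the paper takes (the paper gives no explicit proof, simply citing \cite{bane-tit-2021} for exact optimality of the generalized Shiryaev rule under periodic thresholds $A_{n,\alpha}=A_{n+T,\alpha}$ and plugging this into Theorem~\ref{thm:LFLRobust} with the $o(1)$ terms vanishing); your verification of conditions 1--4 and the algebraic derivation of the recursion are exactly the intended bookkeeping. You also correctly identify that the likelihood ratio in \eqref{eq:ShirRecur_2} should be evaluated at $X_n$, not $X_i$.
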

Equivalent result for i.p.i.d. processes and the minimax setting is not yet available.

\section{General Conditions for Robust Asymptotic Optimality}
\label{sec:AsymptoticOpt}
When the LFL is not a fixed density, it is not known if, in general, the generalized Shiryaev or CUSUM algorithms are exactly optimal, \rc{i.e., when the constraint on the probability of a false alarm or the rate of false alarm is fixed to a finite value (However, see \cite{bane-tit-2021}, where the generalized Shiryev algorithm is shown to be exactly optimal in a non-stationary setting).} However, their asymptotic optimality (with the false alarm rate going to zero) has been extensively studied. One set of sufficient conditions on the asymptotic optimality of the generalized CUSUM algorithm essentially follows from the work of Lai (\cite{lai-ieeetit-1998}). \rc{Another set of (more general) sufficient conditions are obtained in \cite{liang2022quickest}. We discuss the conditions for the generalized CUSUM algorithm below for completeness.} For the general minimax and Bayesian theory (including theory for dependent data), we refer to \cite{lai-ieeetit-1998, tart-veer-siamtpa-2005, tart-niki-bass-2014, tart-book-2019, liang2022quickest}. 

\rc{It must be emphasized at this point that if the LFL is not stationary, then the statistic for the generalized Shiryaev and CUSUM algorithm 
cannot be calculated recursively (although window-limited versions have also been found to be optimal). }

 Define
$$
Z_{n} = \log \frac{\bar{g}_{n}(X_n)}{f(X_n)}
$$
to be the log-likelihood ratio at time $n$ when the change occurs at $\nu$ for the LFL. 
\begin{theorem}[\cite{brucks2023modeling},\cite{lai-ieeetit-1998}]
	\label{thm:modifiedconds}
	\begin{enumerate}
		\item Let there exist a positive number $I$ such that the log likelihood ratios $\{Z_{n}\}$ satisfy the following  condition: 
		\begin{equation}
			\label{eq:Znnu_LB}
			\begin{split}
				\lim_{n \to \infty} \; \sup_{\nu \geq 1} \; \esssup \mathsf{P}_\nu^{\bar{G}} &\left(\max_{t \leq n} \sum_{i = \nu }^{\nu + t} Z_{i} \geq I(1+\delta)n \; \bigg| \; X_1, \dots, X_{\nu-1}\right) = 0.
			\end{split}
		\end{equation}
		Then, we have the universal lower bound as $\gamma \to \infty$, 
		\begin{equation}
			\begin{split}
				\min_{\tau} \; \sup_{\nu \geq 1} \; &\esssup \; \mathsf{E}_\nu^{\bar{G}}[(\tau - \nu + 1)^+| X_1, \dots, X_{\nu-1}] \quad \geq \quad \frac{\log \gamma}{I} (1+o(1)). 
			\end{split}
		\end{equation}
		Here the minimum over $\tau$ is over those stopping times satisfying $\Expect_\infty[\tau] \geq \gamma$. 
		\item The generalized CUSUM algorithm
		$$
		\tau_{c}^* = \min\left\{n \geq 1: \max_{1 \leq k \leq n} \sum_{i=k}^n Z_{i} \geq \log(\gamma)\right\},
		$$
		satisfies
		$$
		\mathsf{E}_\infty[\tau_{c}^*] \geq \gamma.
		$$
		\item Furthermore, if the log-likelihood ratios $\{Z_{n}\}$ also satisfy 
		\begin{equation}
			\label{eq:Znnu_UB}
			\begin{split}
				\lim_{n \to \infty} \; \sup_{k \geq \nu \geq 1} \; \esssup \mathsf{P}_\nu^{\bar{G}} & \left(\frac{1}{n}\sum_{i = k }^{k +n} Z_{i} \leq I - \delta \; \bigg| \; X_1, \dots, X_{k-1}\right) = 0.
			\end{split}
		\end{equation}
		Then as $\gamma \to \infty$, $\tau_{c}^*$ achieves the lower bound:
		\begin{equation}
			\begin{split}
				\sup_{\nu \geq 1} \; \esssup \; \mathsf{E}_\nu^{\bar{G}} &[(\tau_{c}^* - \nu + 1)^+| X_1, \dots, X_{\nu-1}]  \leq  \frac{\log \gamma}{I}(1+o(1)), \quad \gamma \to \infty. 
			\end{split}
		\end{equation}
	\end{enumerate}
\end{theorem}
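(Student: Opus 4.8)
The plan is to treat the three assertions separately, following the general asymptotic framework of Lai \cite{lai-ieeetit-1998}; the genuinely new feature is that the log-likelihood increments $Z_{n,\nu}$ are \emph{bivariate}, depending on both the running index $n$ and the change point $\nu$, which is precisely why the hypotheses \eqref{eq:Znnu_LB} and \eqref{eq:Znnu_UB} carry suprema over $\nu$ (and over $k\ge\nu$) together with conditioning on $X_1,\dots,X_{\nu-1}$. Throughout I would lean on the single structural identity that, by construction, $Z_{n,k}=\log[\bar g_{n,k}(X_n)/f(X_n)]$ satisfies $\Expect_\infty[e^{Z_{n,k}}\mid\mathcal F_{n-1}]=\int\bar g_{n,k}\,dx=1$, so that for each fixed $k$ the process $\Lambda_n^{k}=\exp(\sum_{i=k}^{n}Z_{i,k})$ is a nonnegative $\Prob_\infty$-martingale with unit mean. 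This one fact is the common engine behind both the false-alarm bound and the change-of-measure step in the converse.

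I would dispatch the false-alarm bound (part 2) first via a Shiryaev--Roberts comparison, which needs no i.i.d.\ assumption. Form $R_n=\sum_{k=1}^n\Lambda_n^k$; the increment computation $\Expect_\infty[\Lambda_n^k\mid\mathcal F_{n-1}]=\Lambda_{n-1}^k$ for $k<n$ together with $\Expect_\infty[\Lambda_n^n\mid\mathcal F_{n-1}]=1$ shows that $R_n-n$ is a $\Prob_\infty$-martingale started at $R_0-0=0$. Because the CUSUM statistic obeys $W_n=\log\max_{1\le k\le n}\Lambda_n^k\le\log R_n$, crossing the level $\log\gamma$ forces $R_{\tau_c^*}\ge\gamma$; optional sampling (justified by truncating at $\tau_c^*\wedge m$ and letting $m\to\infty$) then yields $\Expect_\infty[\tau_c^*]=\Expect_\infty[R_{\tau_c^*}]\ge\gamma$.

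The hardest part will be the universal lower bound (part 1), the information-theoretic converse. Fixing $\nu$, conditioning on $\mathcal F_{\nu-1}$, and writing $n_\gamma=(1-\epsilon)(\log\gamma)/I$, the goal is to show $\Prob_\nu^{\bar G}(\tau-\nu+1\le n_\gamma\mid\mathcal F_{\nu-1})\to0$ uniformly in $\nu$ for every admissible $\tau$, since then $\Expect_\nu^{\bar G}[(\tau-\nu+1)^+\mid\mathcal F_{\nu-1}]\ge n_\gamma\,\Prob_\nu^{\bar G}(\tau-\nu+1>n_\gamma\mid\mathcal F_{\nu-1})\ge n_\gamma(1-o(1))$, and $\epsilon\downarrow0$ finishes. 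The mechanism is a change of measure restricted to the good event $\{\max_{t\le n_\gamma}\sum_{i=\nu}^{\nu+t}Z_{i,\nu}<I(1+\delta)n_\gamma\}$: on this event the likelihood ratio $\exp(\sum_{i=\nu}^{\nu+t}Z_{i,\nu})$ is at most $e^{I(1+\delta)n_\gamma}=\gamma^{(1-\epsilon)(1+\delta)}$, so the post-change probability of stopping early is bounded by this factor times the corresponding $\Prob_\infty$-probability plus the mass of the complementary large-deviation event, and the latter is exactly what \eqref{eq:Znnu_LB} drives to zero uniformly in $\nu$. Choosing $\delta$ small enough that $(1-\epsilon)(1+\delta)<1$ makes the amplifying factor a strictly sub-linear power of $\gamma$, which the budget $\Expect_\infty[\tau]\ge\gamma$ can absorb. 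The delicate point — and the reason \eqref{eq:Znnu_LB} carries both a $\sup_{\nu\ge1}$ and an $\esssup$ — is forcing this decay to hold simultaneously over all change points and all pre-change histories.

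For the matching upper bound (part 3) I would bound the worst-case delay of $\tau_c^*$ through $\sum_{n\ge0}\Prob_\nu^{\bar G}(\tau_c^*-\nu+1>n\mid\mathcal F_{\nu-1})$. Selecting the index $k=\nu$ inside the CUSUM maximum gives $W_m\ge\sum_{i=\nu}^{m}Z_{i,\nu}$, so $\{\tau_c^*>\nu+n\}$ entails $\sum_{i=\nu}^{\nu+m}Z_{i,\nu}<\log\gamma$ for every $m\le n$. Condition \eqref{eq:Znnu_UB} guarantees the normalized sums concentrate above $I-\delta$ uniformly in the change point, so once $n$ exceeds $\log\gamma/(I-\delta)$ these tail probabilities decay fast enough to be summable, capping the delay at $(\log\gamma)/(I-\delta)\,(1+o(1))$; letting $\delta\downarrow0$ matches the converse. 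Assembling the pieces, parts 1 and 2 show no rule with $\Expect_\infty[\tau]\ge\gamma$ can have worst-case delay below $(\log\gamma)/I\,(1+o(1))$, while parts 2 and 3 show $\tau_c^*$ is admissible and attains it, establishing its asymptotic optimality.
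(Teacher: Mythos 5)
The paper does not actually prove this theorem: it imports it from Lai \cite{lai-ieeetit-1998} and \cite{brucks2023modeling} "for completeness," so there is no in-paper proof to compare against. Your sketch correctly identifies Lai's architecture, and your Part 2 is essentially complete and correct: $R_n=\sum_{k=1}^n\Lambda_n^k$ with $R_n-n$ a $\Prob_\infty$-martingale, $e^{W_n}\le R_n$, and optional sampling via truncation and Fatou is exactly the standard argument and needs no i.i.d.\ structure. However, Parts 1 and 3 as sketched each contain a concrete gap.

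In Part 1, the intermediate goal you set yourself --- that $\Prob_\nu^{\bar G}(\tau-\nu+1\le n_\gamma\mid\mathcal F_{\nu-1})\to0$ \emph{uniformly in $\nu$} for every admissible $\tau$ --- is false, and no change-of-measure argument can establish it. The deterministic rule $\tau\equiv\lceil\gamma\rceil$ satisfies $\Expect_\infty[\tau]\ge\gamma$, yet for $\nu=\lceil\gamma\rceil$ it stops with delay one with probability one. The theorem only claims a lower bound on $\sup_\nu\esssup\Expect_\nu^{\bar G}[\cdot]$, so what must be shown is that the early-stopping probability is small for \emph{some} $\nu$ and some positive-probability set of pre-change histories. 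That existence claim is extracted from the false-alarm budget by a pigeonhole (or contradiction) step: if $\Prob_\infty(\tau<\nu+n_\gamma\mid\mathcal F_{\nu-1},\tau\ge\nu)$ were bounded below by some $\delta'>0$ uniformly over $\nu$ and histories, then $\tau$ would stop within every window of length $n_\gamma$ with conditional probability at least $\delta'$, forcing $\Expect_\infty[\tau]=O(n_\gamma/\delta')\ll\gamma$. Your sketch applies the change of measure plus condition \eqref{eq:Znnu_LB} correctly on the good event, but omits this selection step entirely, and the "budget can absorb it" remark does not substitute for it. In Part 3, selecting only the index $k=\nu$ inside the CUSUM maximum yields $\Prob_\nu^{\bar G}(\tau_c^*-\nu+1>n\mid\mathcal F_{\nu-1})\to0$ for $n>\log\gamma/(I-\delta)$, but convergence to zero of the tail probabilities does \emph{not} make them summable, so the claimed bound on the conditional expectation does not follow as written. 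The standard repair is a block argument: partition time after $\nu$ into blocks of length $n_\gamma=\lceil\log\gamma/(I-\delta)\rceil$ and, on the $j$-th block, restart the CUSUM from the candidate index $k=\nu+jn_\gamma$; condition \eqref{eq:Znnu_UB}, with its supremum over all $k\ge\nu$ (not just $k=\nu$), guarantees each block crosses the threshold except with probability $\epsilon_\gamma\to0$ uniformly, giving the geometric tail $\Prob_\nu^{\bar G}(\tau_c^*>\nu+jn_\gamma\mid\mathcal F_{\nu-1})\le\epsilon_\gamma^{\,j}$ and hence $\Expect_\nu^{\bar G}[(\tau_c^*-\nu+1)^+\mid\mathcal F_{\nu-1}]\le n_\gamma/(1-\epsilon_\gamma)$. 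Your sketch never invokes the restart indices $k>\nu$, which is precisely the reason the hypothesis carries $\sup_{k\ge\nu\ge1}$; without them the upper bound is not established.
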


This theorem establishes the fact that if the conditions in the above theorem are satisfied, then the generalized CUSUM algorithm is minimax asymptotically robust and optimal for the problem in \eqref{eq:robustProbmini}. A similar result can also be stated for the generalized Shiryaev algorithm (\cite{tart-veer-siamtpa-2005, tart-niki-bass-2014, tart-book-2019}). 
When we have
\begin{equation}
    \label{eq:robustexactiid_2}
    \bar{g}_{n} = \bar{g}, \quad \forall n \geq \nu,
\end{equation}
the above theorem provides a performance analysis of the robust CUSUM algorithm. In this case, it can be shown that $I = D(\bar{g} \; \| \; f)$ and choosing threshold $\rc{\bar A} = \log \gamma$ ensures
\begin{equation}
    \label{eq:CUSUMMFA}
    \Expect_\infty[\tau_c^*] \geq \gamma.
\end{equation}
In addition, as $\rc{\bar A} = \log \gamma \to \infty$, we have
\begin{equation}
    \label{eq:CUSUMEDD}
    \Expect_1[\tau_c^*]  = \frac{\log(\gamma)}{D(\bar{g} \; \| \; f)}(1+o(1)).
\end{equation}

\rc{Recently, sufficient conditions more general than those discussed in Theorem~\ref{thm:modifiedconds} above for asymptotic optimality of the generalized CUSUM algorithm have been obtained in \cite{liang2022quickest}. Although the conditions given below are more general than those in Theorem~\ref{thm:modifiedconds}, we produce both of them here because the conditions in Theorem~\ref{thm:modifiedconds} are classical and have also been generalized to non-i.i.d. data. 
To discuss the sufficient conditions from \cite{liang2022quickest}, we first define the following growth function:
\begin{align}
\label{eqn:growth function}
    g_{\nu}(n) = \sum_{i = \nu}^{\nu + n - 1}\Expect_{\nu}\left[Z_{i}\right], \quad \forall n \geq 1.
\end{align}
Assume that for each $x > 0$, 
\begin{align}
    g^{-1}(x) \coloneqq \sup_{\nu \geq 1}g_{\nu}^{-1}(x)
\end{align}
exists. Note that both $g_{\nu}^{-1}$ and $g^{-1}$ are increasing and continuous.
The key assumption on $g^{-1}(x)$ guaranteeing the asymptotic optimality is
\begin{align}
    \log g^{-1}(x) = o(x), \quad \text{as }x \to \infty.
\end{align}
In this general theory, the condition \eqref{eq:Znnu_LB} is generalized to (for independent data)
\begin{align}
\label{eqn:12-Liang}
    \lim_{n \to \infty} \; \sup_{\nu \geq 1} \;  \mathsf{P}_\nu &\left(\max_{t \leq n} \sum_{i = \nu }^{\nu + t-1} Z_{i} \geq (1+\delta) g_{\nu}(n) \right) = 0, \quad \forall \delta > 0,
\end{align}
and the condition \eqref{eq:Znnu_UB} is generalized to (for independent data)
\begin{align}
\label{eqn:13-Liang}
   \lim_{n \to \infty} \max_{1\leq \nu \leq t} \Prob_{\nu}\left(\sum_{i = t}^{t + n -1} Z_{i} \leq (1 - \delta)g_{\nu}(n)\right) = 0, \quad \forall \delta \in  (0, 1).
\end{align}
Under these new conditions, it is shown that the optimal performance is given by
\begin{equation}
			\begin{split}
				\sup_{\nu \geq 1} \; \esssup \; \mathsf{E}_\nu^{\bar{G}} &[(\tau_{c}^* - \nu + 1)^+| X_1, \dots, X_{\nu-1}]  =  g^{-1}(\log \gamma)(1+o(1)), \quad \gamma \to \infty. 
			\end{split}
		\end{equation}
Thus, if the growth rate of KL-divergences $g_{\nu}(n)$ is exponential in $n$, then the delay will be proportional to $\log (\log (\gamma))$. }

We now give two examples of non-stationary processes for which the above conditions are satisfied.

\begin{example}[{I.P.I.D. Process}]
    If the LFL is an i.p.i.d. process, i.e., there exists an integer $T$ such that
    $$
    \bar g_{n+T} = \bar g_n, \quad \forall n, 
    $$
    then the conditions of Theorem~\ref{thm:modifiedconds} are satisfied with
    $$
    I = \frac{1}{T} \sum_{n=1}^T D(\bar g_n \; \| \; f). 
    $$
    See (\cite{bane-tit-2021, oleyaeimotlagh2023quickest, bane-elseviersp-2024}). 
\end{example}

\begin{example}[{MLR Order Processes}]
\label{exam:MLRorder}
     Let the LFL $\{\bar{g}_n\}$ be such that the densities $\{\bar{g}_n\}$ are increasing in MLR order: for all $n \geq 1$, 
    $$
    \frac{\bar{g}_{n+1}(x)}{\bar{g}_n(x)} \quad \uparrow \quad x.
    $$
    That is, the likelihood ratio between $\bar g_{n+1}$ and $\bar g_n$ is monotonically increasing in $x$, for every $n$, 
    then the conditions of Theorem~\ref{thm:modifiedconds} are satisfied with
    $$
    I = \lim_{N \to \infty} \frac{1}{N} \sum_{n=1}^N D(\bar{g}_n \; \| \; f),
    $$ 
    provided the above limit exists (\cite{brucks2023modeling}). 
\end{example}

\begin{example}[\rc{Processes with Uncontrolled Divergence Growth}]
     Consider the LFL in Example~\ref{exam:MLRorder} above. Let us assume now that $I = \lim_{N \to \infty} \frac{1}{N} \sum_{n=1}^N D(\bar{g}_n \; \| \; f) = \infty$. This means the rate of growth of KL divergences is superlinear. In this case, however, if the densities $\{\bar{g}_n\}$ satisfy \eqref{eqn:12-Liang} and \eqref{eqn:13-Liang}, then the generalized CUSUM with LFL is asymptotically optimal. We refer the reader to \cite{liang2022quickest} for details.


\end{example}




\section{Examples of Least Favorable Laws}
\label{sec:LFDexamples}
In this section, we provide examples of LFL from Gaussian and Poisson families. We use the following simple lemma in these examples. Its proof can be found, for example, in \cite{krishnamurthy2016partially}. We provide the proof for completeness. 
\begin{lemma}
\label{lem:MLRorder}
    Let $f$ and $g$ be two probability density functions such that $f$ dominated $g$ in monotone likelihood ratio (MLR) order:
    $$
    \frac{f(x)}{g(x)} \; \; \uparrow \; \; x. 
    $$
    Then, $f$ also dominates $g$ in stochastic order, i.e.,
    $$
    \int_{x}^\infty f(y) dy \geq \int_x^\infty g(y) dy, \quad \forall x. 
    $$
\end{lemma}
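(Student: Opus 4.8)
The plan is to reduce the claim to a single-crossing property of $f$ and $g$ and then to a monotonicity argument for the difference of the survival functions. Write $r(x) = f(x)/g(x)$, which is non-decreasing in $x$ by the MLR assumption, and define
$$
D(x) = \int_x^\infty \left(f(y) - g(y)\right)\, dy.
$$
Since $f$ and $g$ are densities, $D(-\infty) = \int_{-\infty}^\infty (f-g)\,dy = 0$ and $D(+\infty) = 0$; the goal is exactly to show that $D(x) \geq 0$ for all $x$.

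First I would establish that $f - g$ changes sign at most once, and that when it does, the change is from negative to positive. If $r \equiv 1$ then $f = g$ and the claim is trivial, so assume otherwise. Because $r$ is non-decreasing, the set $\{x : r(x) < 1\}$ is a lower interval and $\{x : r(x) \geq 1\}$ is an upper interval; let $x_0$ be the separating point $x_0 = \inf\{x : r(x) \geq 1\}$. For $y < x_0$ one has $f(y) < g(y)$, and for $y > x_0$ one has $f(y) \geq g(y)$. Both regions are nonempty: if $f < g$ everywhere then $\int f < \int g$, contradicting that both integrate to one, and symmetrically for the reverse.

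Given this single crossing, I would conclude by analyzing the monotonicity of $D$. Formally $D'(x) = g(x) - f(x)$, which is positive on $(-\infty, x_0)$ and negative on $(x_0, \infty)$; hence $D$ increases on $(-\infty, x_0)$ and decreases on $(x_0, \infty)$, starting and ending at the value $0$. A function that rises from $0$ to a maximum and then falls back to $0$ is nonnegative throughout, which yields $D(x) \geq 0$, i.e. $\int_x^\infty f(y)\,dy \geq \int_x^\infty g(y)\,dy$ for every $x$, as required.

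The main obstacle is the rigorous handling of the single-crossing step under minimal regularity: the ratio $r$ may be infinite or undefined where $g$ vanishes, may equal $1$ on an entire interval rather than at an isolated point, and the crossing point $x_0$ may be $\pm\infty$ in degenerate cases. The equivalence (mutual absolute continuity) assumed in the paper keeps $r$ finite and positive on the common support, so these issues reduce to carefully defining $x_0$ and checking the boundary values of $D$. Once the sign pattern of $f - g$ is pinned down, the concluding step can be phrased directly through monotonicity of the integral $D$ on each side of $x_0$, which avoids invoking any differentiability of $f$ or $g$.
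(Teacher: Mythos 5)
Your proof is correct and follows essentially the same route as the paper's: both arguments hinge on the single-crossing of $f-g$ at the point where the likelihood ratio crosses one, and then verify the inequality separately on each side of that point (the paper bounds $\int_{-\infty}^x f \leq \int_{-\infty}^x g$ to the left of the crossing and uses $\int_x^\infty (f/g)\,g \geq \int_x^\infty g$ to the right, which is exactly the increasing/decreasing behavior of your function $D$ with zero boundary values). The packaging via unimodality of $D$ is only a cosmetic difference.
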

\begin{proof}
    Define, $t = \sup \left\{x: \frac{f(x)}{g(x)} \leq 1\right\}$.
    Then, when $x \leq t$, 
    $$
    \int_{-\infty}^x f(y) dy \leq  \int_{-\infty}^x g(y) dy,
    $$
    implying $
    \int_{x}^\infty f(y) dy \geq \int_x^\infty g(y) dy.
    $
    On the other hand, if $x > t$, then 
    $$
    \int_{x}^\infty f(y) dy =\int_{x}^\infty \frac{f(y)}{g(y)}g(y) dy \geq \int_x^\infty g(y) dy.
    $$
    This proves the lemma. 
\end{proof}
Note that the statement is valid even if the densities are with respect to a measure more general than the Lebesgue measure, including the counting measure. The latter implies that the MLR order for mass functions implies their stochastic order (this can be proved by replacing integrals with summations in the above proof). 

We now give two examples of MLR order densities. By the above lemma, they are also ordered by stochastic order. 
First, consider two Gaussian densities $g=\mathcal{N}(\mu_1, \sigma^2)$ and $f=\mathcal{N}(\mu_2, \sigma^2)$, with $\mu_2 > \mu_1$.  Then since
$$
\frac{f(x)}{g(x)} = \exp \left(\frac{\mu_{2}-\mu_1}{\sigma^2}x + \frac{\mu_1^2- \mu_{2}^2}{2 \sigma^2}\right),
$$
the ratio increases with $x$ because $(\mu_{2}-\mu_1) \geq  0$. 
Next, consider two Poisson mass functions
	$g=\text{Pois}(\lambda_1)$ and $f=\text{Pois}(\lambda_2)$, with $\lambda_2 > \lambda_1$. 
	Then since
	$$
	\frac{f(k)}{g(k)} = \left(\frac{\lambda_{2}}{\lambda_1}\right)^k e^{-\lambda_{2} + \lambda_1},
	$$
	the ratio increases with $k$ because $\frac{\lambda_{2}}{\lambda_1}\geq  1$.

We now give two examples of non-stationary processes with specified uncertainty classes. We then identify their LFLs. 

\begin{example}[Gaussian LFL]
\label{exam:GaussLFD}
    Let the pre-change density be given by
    $
    f = \mathcal{N}(0,1)
    $
    and the post-change densities are given by
    $$
    g_{n} = \mathcal{N}(\mu_{n}, 1). 
    $$
    Now, let for each $\nu$, $\mu_{n}$ is increasing in $n$. The means that $\{\mu_{n}\}$ are not known but are believed to satisfy for some known $\{\bar{\mu}_{n}\}$,
    $$
    \mu_{n} \geq \bar{\mu}_{n} > 0, \quad \forall n \geq \nu,
    $$
    Then, 
    \begin{align*}
\log \frac{\bar{g}_{n}(X)}{f(X)} =  \left(\bar{\mu}_{n}\; X - \frac{\bar{\mu}_{n}^2}{2}\right),
    \end{align*}
     and 
     \begin{align*}
        X \sim  \mathcal{N}(\bar{\mu}_{n}, 1) &\implies \log \frac{\bar{g}_{n}(X)}{f(X)}  \sim \mathcal{N} \left( \frac{\bar{\mu}_{n}^2}{2}, \; \bar{\mu}_{n}^2\right), \\
         X \sim  \mathcal{N}({\mu}_{n}, 1) &\implies \log \frac{\bar{g}_{n}(X)}{f(X)}  \sim \mathcal{N} \left(\bar{\mu}_{n}\; {\mu}_{n} - \frac{\bar{\mu}_{n}^2}{2}, \; \bar{\mu}_{n}^2\right).
     \end{align*}
      Since $\mu_{n} \geq \bar{\mu}_{n}$, we have
      $$
      \bar{\mu}_{n}\; {\mu}_{n} - \frac{\bar{\mu}_{n}^2}{2} \; \; \geq \; \; \frac{\bar{\mu}_{n}^2}{2}.
      $$
      Thus, $\mathcal{N} \left(\bar{\mu}_{n}\; {\mu}_{n} - \frac{\bar{\mu}_{n}^2}{2}, \; \bar{\mu}_{n}^2\right)$ dominates $\mathcal{N} \left( \frac{\bar{\mu}_{n}^2}{2}, \; \bar{\mu}_{n}^2\right)$ in MLR order (based on the discussion after Lemma~\ref{lem:MLRorder}) and hence in stochastic order (by Lemma~\ref{lem:MLRorder}). This shows that the post-change law made with the sequence of densities
      $\{\mathcal{N}(\bar{\mu}_{n}, 1)\}$ is the LFL.

\end{example}

We now give an example of LFL from the Poisson family of distributions. 
\begin{example}[Poisson LFL]
\label{exam:PoissonLFD}
    Let the pre-change density be given by
    $
    f = \text{Pois}(\lambda_0)
    $
    and the post-change densities are given by
    $$
    g_{n} = \text{Pois}(\lambda_{n}). 
    $$
    Now, let $\lambda_{n}$ is increasing in $n$. The means $\{\lambda_{n}\}$ are not known but are believed to satisfy for some known $\{\bar{\lambda}_{n}\}$,
    $$
    \lambda_{n} \geq \bar{\lambda}_{n} > \lambda_0, \quad \forall n \geq \nu,
    $$
    Then, 
    \begin{align*}
\log \frac{\bar{g}_{n}(X)}{f(X)} =   \log \left[\left(\frac{\bar{\lambda}_{n}}{\lambda_0}\right)^X e^{-\bar{\lambda}_{n} + \lambda_0}\right] = X \log \left(\frac{\bar{\lambda}_{n}}{\lambda_0}\right) -\bar{\lambda}_{n} + \lambda_0,
    \end{align*}
     and 
     \begin{align*}
        X \sim  \text{Pois}(\bar{\lambda}_{n}) &\implies \log \frac{\bar{g}_{n}(X)}{f(X)}  \sim \text{Pois} \left(\bar{\lambda}_{n} \log \left(\frac{\bar{\lambda}_{n}}{\lambda_0}\right) -\bar{\lambda}_{n} + \lambda_0\right), \\
         X \sim  \text{Pois}({\lambda}_{n}) &\implies \log \frac{\bar{g}_{n}(X)}{f(X)}  \sim \text{Pois} \left({\lambda}_{n} \log \left(\frac{\bar{\lambda}_{n}}{\lambda_0}\right) -\bar{\lambda}_{n} + \lambda_0\right) .
     \end{align*}
     We note that while the support of the random variable $\log \frac{\bar{g}_{n}(X)}{f(X)} $ is 
     $$
     \left\{k \log \left(\frac{\bar{\lambda}_{n}}{\lambda_0}\right) -\bar{\lambda}_{n} + \lambda_0\right\}, \quad k=0,1,2,3, \dots,
     $$
     its law is completely characterized by a Poisson distribution. This is what we represent using the $\sim$ sign above.  
      Since $\lambda_{n} \geq \bar{\lambda}_{n}$, we have
$$
\left({\lambda}_{n} \log \left(\frac{\bar{\lambda}_{n}}{\lambda_0}\right) -\bar{\lambda}_{n} + \lambda_0\right) \geq \left(\bar{\lambda}_{n} \log \left(\frac{\bar{\lambda}_{n}}{\lambda_0}\right) -\bar{\lambda}_{n} + \lambda_0\right).
$$
      Thus, $\text{Pois} \left({\lambda}_{n} \log \left(\frac{\bar{\lambda}_{n}}{\lambda_0}\right) -\bar{\lambda}_{n} + \lambda_0\right)$ dominates $\text{Pois} \left(\bar{\lambda}_{n} \log \left(\frac{\bar{\lambda}_{n}}{\lambda_0}\right) -\bar{\lambda}_{n} + \lambda_0\right)$ in MLR order (based on the discussion after Lemma~\ref{lem:MLRorder}) and hence in stochastic order (by Lemma~\ref{lem:MLRorder}). This shows that the post-change law made with the sequence of densities
      $\{\text{Pois}(\bar{\lambda}_{n})\}$ is the LFL.

\end{example}


\section{Numerical Results}
\label{sec:NumericalResults}
We show the effectiveness of the robust tests on simulated and real data. 
\subsection{Application to Simulated Gaussian and Poisson Data}
For the Gaussian case, we choose the following model:
\begin{equation}
    \begin{split}
        f = \mathcal{N}(0,1), \quad \quad g_{n} = \mathcal{N}(\mu_{n}, 1), \quad \quad \mu_{n} \geq 0.5.
    \end{split}
\end{equation}
From Example~\ref{exam:GaussLFD}, it follows that the LFL is $\mathcal{N}(0.5, 1)$. Thus, the CUSUM and Shiryaev test designed with pre-change density $f=\mathcal{N}(0,1)$ and post-change density $\mathcal{N}(0.5, 1)$ are robust optimal. In Fig.~\ref{fig:CUSUM} (Left), we have compared the robust CUSUM test with a randomly picked non-robust test designed using pre-change density $f=\mathcal{N}(0,1)$ and post-change density $\mathcal{N}(1.5, 1)$. The data samples after change for the simulations are generated using 
$\mathcal{N}(0.5, 1)$. This data generation corresponds to the worst-case scenario, and as expected, the worst-case performance of 
the robust test is better than that of a non-robust test.

For the Poisson case, we choose the following model:
\begin{equation}
    \begin{split}
        f = \text{Pois}(0.5), \quad \quad g_{n} = \text{Pois}(\lambda_{n}), \quad \quad \lambda_{n} \geq 0.8.
    \end{split}
\end{equation}
From Example~\ref{exam:PoissonLFD}, it follows that the LFL is $\text{Pois}(0.8)$. We compare it with a randomly chosen non-robust test with post-change law $\text{Pois}(1.5)$. The data for the Poisson case was generated using the worst-case scenario $\text{Pois}(0.8)$; See Fig.~\ref{fig:CUSUM} (Right). 

\begin{figure}[h]
    \centering
    \includegraphics[scale=0.45]{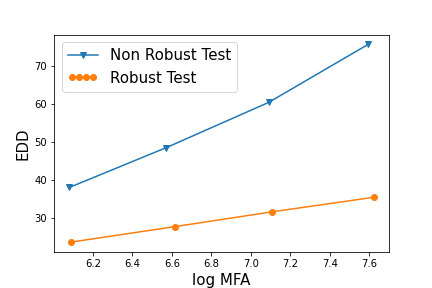}
    \includegraphics[scale=0.45]{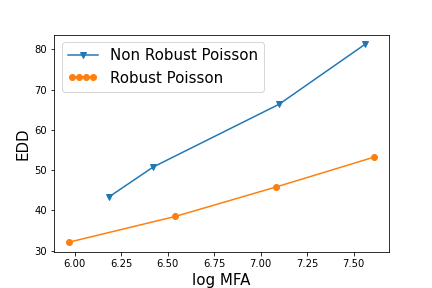}
    \caption{[Left] Comparison of robust CUSUM algorithm designed using pre-change density $f=\mathcal{N}(0,1)$ and post-change density $\mathcal{N}(0.5, 1)$ with a non-robust test designed using pre-change density $f=\mathcal{N}(0,1)$ and post-change density $\mathcal{N}(1.5, 1)$. The data samples are generated after change using $\mathcal{N}(0.5, 1)$. [Right] A similar plot for Poisson laws, with $f=\text{Pois}(0.5)$, LFL as $\text{Pois}(0.8)$, non-robust post-change as $\text{Pois}(1.5)$, and data samples are generated from $\text{Pois}(0.8)$. Here $\text{EDD}$ is used to denote $\Expect_1^{\bar{G}}[\tau_c^*-1]$ and 
   MFA $ = \Expect_{\infty}[\tau]$ is the mean time to a false alarm, where change time $\nu = \infty$ (no change).}
    \label{fig:CUSUM}
\end{figure}

\begin{figure}[h]
    \centering
    \includegraphics[scale=0.45]{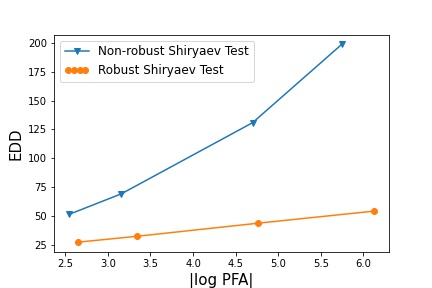}
    \includegraphics[scale=0.45]{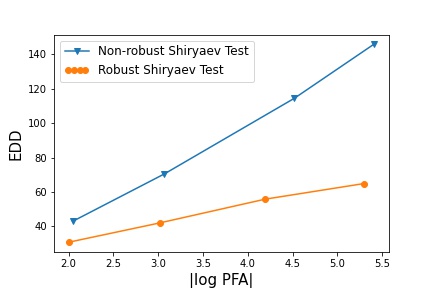}
    \caption{Comparisons for the robust Shiryaev test. These two figures were generated using the same setup used for generating Fig.~\ref{fig:CUSUM}. In addition, the change point is assumed to be a geometric random variable with parameter $0.01$. Here EDD represents $\Expect[(\tau^*-\nu)^+]$ and PFA $= \mathsf{P}^{\pi}(\tau < \nu)$ is the probability of false alarm.}
    \label{fig:Shiryaev}
\end{figure}
In Fig.~\ref{fig:Shiryaev}, we have shown the corresponding comparisons for the robust Shiryaev test using the same Gaussian and Poisson models as above. The change point is assumed to be a geometric random variable with parameter $0.01$.

\subsection{Applying Robust CUSUM Test to Pittsburgh Flight Data}
In this section, we apply the robust CUSUM test to detect the arrival of aircraft based on data collected around Pittsburgh-Butler Regional Airport (\cite{Patrikar2021}). The distance measurements for the last $100$ seconds for randomly selected $35$ flights are shown in Fig.~\ref{fig:exampleFlights}. We converted these distance measurements to signals by using the transformation $\frac{10}{\text{Distance}}$ (again see Fig.~\ref{fig:exampleFlights}). We then added $\mathcal{N}(0,1)$ noise after padding $100$ zeros 
at the beginning of the signals (see Fig.~\ref{fig:FlightSigNoise}). The noise is also added to simulate a scenario where an approaching enemy is being detected in a noisy environment. 


\begin{figure}[h]
    \centering
    \includegraphics[scale=0.45]{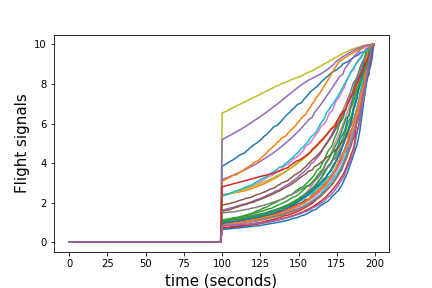}
    \includegraphics[scale=0.45]{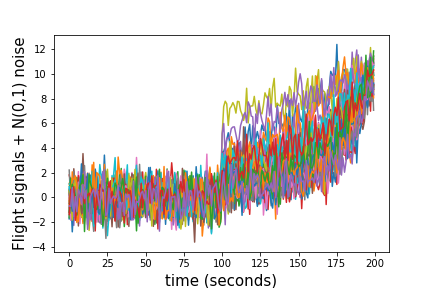}
    \caption{[Left] Flight signals for the last $100$ seconds of $35$ randomly chosen aircraft arriving at the Pittsburgh-Butler Regional Airport (\cite{Patrikar2021}). Signals are padded with zeros. The first $100$ seconds can be thought of as time before the aircraft appears in the sensor system. [Right] Flight signal corrupted by Gaussian noise, $\mathcal{N}(0,1)$. }
    \label{fig:FlightSigNoise}
\end{figure}


\begin{figure}[h]
    \centering
    \includegraphics[scale=0.45]{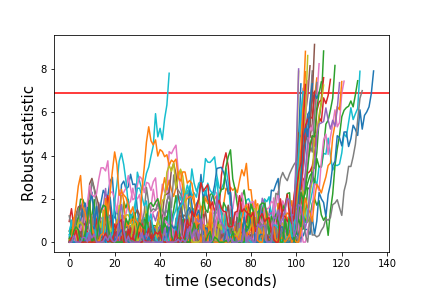}
    \includegraphics[scale=0.45]{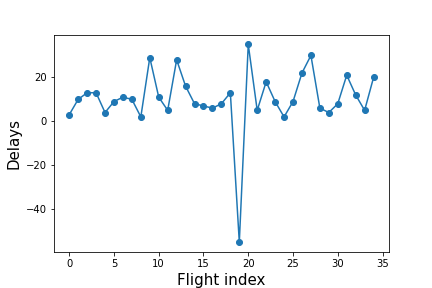}
    \caption{[Left] Robust CUSUM statistic obtained using $\mathcal{N}(0,1)$ as pre-change density and $\mathcal{N}(0.64,1)$ as post-change. The threshold $6.9 = \log(1000)$ is chosen to ensure the meantime to false alarm, MFA $> 1000$. [Right] The detection delay profile for the $35$ flights. We saw one false alarm (flight index $19$). The flight arrivals were detected with an average delay of 10.8 seconds over detection paths, i.e., after the false alarm path had been removed.}
    \label{fig:DelayPathProfile}
\end{figure}
Inspired by Example~\ref{exam:GaussLFD}, we design a robust CUSUM algorithm by selecting the pre-change density as $\mathcal{N}(0,1)$ and the post-change as $\mathcal{N}(0.64,1)$. Here $0.64$ is the minimum value of the flight signal across all $35$ flights. We choose the threshold $6.9 = \log(1000)$ to ensure that the mean time to a false alarm (MFA) is greater than $1000$; see \eqref{eq:CUSUMMFA}. The robust CUSUM statistics for all $35$ aircraft are plotted in Fig.~\ref{fig:DelayPathProfile} (Left) with the delay profile plotted on the right. The average detection delay over the delay paths was found to be $10.8$ seconds. 

\subsection{Applying Robust CUSUM Test to COVID Infection Data}
We now apply the robust CUSUM test to detect the onset of a pandemic. We downloaded the publicly available COVID infection dataset. We selected two U.S. counties, Allegheny and St Louis. The daily number of cases for the first $100$ days are plotted in Fig.~\ref{fig:exampleCOVID}. The infection rates between the two cities are of similar order since the two cities have similar population sizes. Similar to the Pittsburgh flight data in the previous section, we append the COVID data with zeros and add noise. Since the values are integers, we added $\text{Pois}(1)$ noise; see Fig.~\ref{fig:data_stat_Allegheny} (Left) and  
Fig.~\ref{fig:data_stat_StLouis} (Left). This data generation process simulates the scenario where it is required to detect the onset of a pandemic in the backdrop of daily infections due to other viruses or to detect the arrival of a new variant. 
Thus, the data represents the detection of deviation from a baseline. 

\begin{figure}[h]
    \centering
    \includegraphics[scale=0.45]{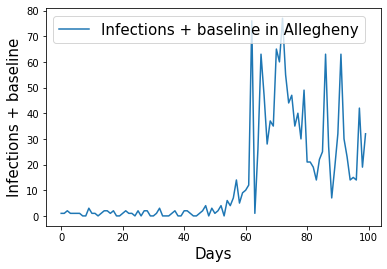}
    \includegraphics[scale=0.45]{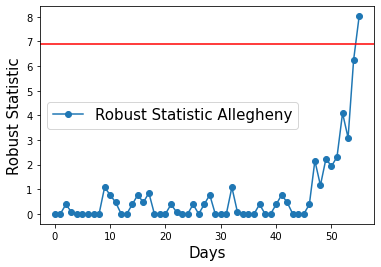}
    \caption{Noisy COVID infection data and robust statistic Allegheny county.}
    \label{fig:data_stat_Allegheny}
\end{figure}


\begin{figure}[h]
    \centering
    \includegraphics[scale=0.45]{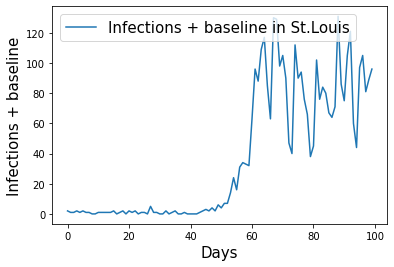}
    \includegraphics[scale=0.45]{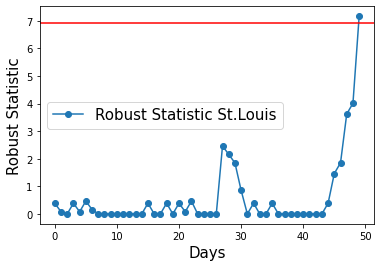}
    \caption{Noisy COVID infection data and robust statistic St. Louis county.}
    \label{fig:data_stat_StLouis}
\end{figure}


Since the actual number of infections is not known and can change over time, we take the robust approach and design the robust test with the LFL given by $\text{Pois}(2)$. The threshold is again chosen to be $6.9 = \log(1000)$ to guarantee a mean time to false alarm greater than $1000$ (see \eqref{eq:CUSUMMFA}). 
As seen in Fig.~\ref{fig:data_stat_Allegheny} (Right) and  
Fig.~\ref{fig:data_stat_StLouis} (Right), the 
robust CUSUM test detects the change quickly (within a week) since the infection rates started to become significant (in both counties) around the $50$th day.

\section{Conclusion}
We have developed optimal algorithms for the quickest detection of changes from an i.i.d. process to an independent non-stationary process. In most applications of quickest change detection, the post-change law is unknown, and learning it is much harder when the post-change process is also non-stationary. We have shown that if the post-change non-stationary family has a law that is least favorable, then the CUSUM or Shiryaev algorithms designed using the least favorable law (LFL) are robust and optimal. Note that while the LFL itself can be non-stationary, the most interesting case occurs when the LFL is stationary and also i.i.d. The robust optimal solution is then applied to COVID pandemic data and real flight data to detect anomalies. 




\bibliographystyle{tfcad}
\bibliography{TaposhQCD.bib}

\end{document}